\numberwithin{equation}{section}
\theoremstyle{plain}
\newtheorem{theo}{Theorem}[section]
\newtheorem{lem}{Lemma}[section]
\newtheorem{prop}{Proposition}[section]
\theoremstyle{definition}
\newtheorem{rem}{Remark}[section]
\newcommand{\Lminus}{L_{-}}
\newcommand{\Lplus}{L_{+}}
\newcommand{\Lpplus}{\mathscr{L}_{+}}
\newcommand{\En}{\mathscr{E}_R}
\newcommand{\tEn}{\mathscr{\tilde E}_R}
\newcommand{\R}{\mathbb{R}}
\newcommand{\Rtre}{\mathbb{R}^{3}}
\newcommand{\Hspace}{H^1(\mathbb{R}^{3})}
\newcommand{\Hball}{H^1_0(B_R)}
\newcommand{\Lball}{L^2(B_R)}
\newcommand{\Om}{\Omega}
\newcommand{\Gker}{(-\Delta_{B_R})^{-1}(x,y)}
\newcommand{\minP}{\phi_R}
\newcommand{\HP}{H_R}
\newcommand{\QLplus}{Q\Lplus Q}
\newcommand{\restr}[1]{|_{#1}}
\newcommand{\XL}{X^{(l)}}
\newcommand{\tLplusL}{\tilde{L}_+^{(l)}}
\newcommand{\LplusL}{L_+^{(l)}}
\newcommand{\LminusL}{L_-^{(l)}}
\newcommand{\XLone}{X^{(l)}_1}
\newcommand{\XLtwo}{X^{(l)}_2}
\newcommand{\dxi}{\frac{\partial}{\partial x_i}}
\newcommand{\tLplus}{\tilde{L}_+}
\newcommand\id{\mathds{1}}
\DeclareMathOperator{\spn}{span}
\DeclareMathOperator{\infspec}{inf\,spec}
\DeclareMathOperator{\ran}{ran}
\DeclareMathOperator{\dom}{dom}
\begin{document}

\title[Minimizers of the Pekar Functional on a Ball]{Uniqueness and Non-degeneracy of Minimizers of the Pekar Functional on a Ball}

\author{Dario Feliciangeli and Robert Seiringer}
\address{IST Austria, Am Campus 1, 3400 Klosterneuburg, Austria}

\begin{abstract}
We consider the Pekar functional on a ball in $\R^3$. We prove uniqueness of minimizers, and a quadratic lower bound in terms of the distance to the minimizer. The latter follows from non-degeneracy of the Hessian at the minimum.
\end{abstract}

\date{April 18, 2019}

\maketitle

\section{Statement of the Problem and Main Results}

The Pekar functional arises as a classical approximation of the ground state energy of the Fr\"ohlich polaron model. Works of Donsker and Varadhan \cite{donsker1983asymptotics} and Lieb and Thomas \cite{lieb1997exact} show that this approximation is correct, up to lower order corrections, in the strong coupling limit. Motivated by \cite{frank2019quantum}, where quantum corrections to the classical approximation were studied in the case of a polaron confined to a bounded subset of $\R^3$, we consider here the Pekar functional on a ball. Our goal is to extend the results of \cite{lieb1977existence} and  \cite{lenzmann2009uniqueness}, where the problem is treated on $\Rtre$, to this case. In particular, we refer to the existence and uniqueness 
of minimizers 
(proved in \cite{lieb1977existence}) and to the coercivity around these minimizers (proved in \cite{lenzmann2009uniqueness}). 

Let $B_R$ denote the open ball of radius $R$ centered at the origin. 
We will consider Dirichlet boundary conditions on $B_R$, which corresponds to working with functions  $\phi \in  H^1_0(B_R)$. The Pekar  functional is 
\begin{equation}
\label{energy}
\En(\phi)=\int_{B_R}|\nabla \phi|^2 dx - 4\pi\int_{B_R} \int_{B_R} \Gker|\phi(x)|^2|\phi(y)|^2 dx dy\,,
\end{equation}
where $\Gker$ denotes the integral kernel of the inverse of the Dirichlet Laplacian on $B_R$. Explicitly, 
\begin{equation}\label{DGF}
\Gker= \frac 1{4\pi} \left( \frac {1}{|x-y|} - \frac{1}{\big| \frac{|y|}{R} x- \frac{R}{|y|} y\big|}\right)\,.
\end{equation}
Our main results are as follows:

\begin{theo}
\label{minimizer}
For any $R>0$, there exists a minimizer $0\leq\phi_R\in C^{\infty}(B_R)\cap H^1_0(B_R)$ such that
\begin{equation}
\En(\phi_R)= E_R := \inf\{\En(\phi): \phi\in H^1_0(B_R), \|\phi\|_2=1\}.
\end{equation}
Moreover, $\phi_R$ is the unique positive minimizer, it is strictly positive, radial and decreasing. Any other minimizer of $\En$ differs from $\phi_R$ by multiplication by a constant phase.
\end{theo}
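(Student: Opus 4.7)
My plan is to split the proof into three stages: existence via the direct method, qualitative properties of any minimizer, and finally uniqueness.

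For existence, the pointwise bound $\Gker \le (4\pi|x-y|)^{-1}$, which is visible from the explicit formula, combined with the Hardy--Littlewood--Sobolev inequality and the Gagliardo--Nirenberg interpolation $\|\phi\|_{12/5}\le C\|\phi\|_2^{3/4}\|\nabla\phi\|_2^{1/4}$ yields
\begin{equation*}
4\pi\int_{B_R}\int_{B_R}\Gker|\phi(x)|^2|\phi(y)|^2\,dx\,dy \le C\|\nabla\phi\|_2
\end{equation*}
under $\|\phi\|_2=1$, so $\En$ is bounded below and every minimizing sequence is bounded in $H^1_0(B_R)$. Rellich--Kondrachov compactness supplies a subsequence that converges weakly in $H^1_0$, strongly in $L^p$ for each $p<6$, and pointwise almost everywhere. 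Weak lower semicontinuity of the kinetic term, preservation of the $L^2$-constraint, and convergence of the nonlinear term (via the same HLS estimate applied to differences) then produce a minimizer.

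For the qualitative properties, I would first pass from $\phi$ to $|\phi|$, using $|\nabla|\phi||\le|\nabla\phi|$ and the fact that $\En$ depends only on $|\phi|^2$, to obtain a nonnegative minimizer. Talenti's comparison principle for the Dirichlet Laplacian on $B_R$ shows that $V_{|\phi|^2}$ is dominated pointwise by the potential of its symmetric decreasing rearrangement, and together with the P\'olya--Szeg\H{o} inequality this implies that the rearranged function $|\phi|^*$ is again a minimizer. Appealing to the Brothers--Ziemer equality case forces $|\phi|$ itself to be radially decreasing about the origin, so we may take $\minP\ge0$ radial and radially nonincreasing. Moser iteration applied to the Euler--Lagrange equation
\begin{equation*}
-\Delta\minP + \lambda\minP = 8\pi V_{\minP^2}\minP \quad\text{in } B_R, \qquad \minP|_{\partial B_R}=0,
\end{equation*}
with $V_{\minP^2}(x):=\int_{B_R}\Gker\minP(y)^2\,dy$, yields $\minP\in L^\infty$, and standard elliptic bootstrap gives $\minP\in C^\infty(B_R)$. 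The strong maximum principle then forces strict positivity in $B_R$.

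Uniqueness is the main obstacle. First, any minimizer $\phi$ can be written as $e^{i\theta}|\phi|$ for a constant phase $\theta$, since strict inequality in $|\nabla\phi|\ge|\nabla|\phi||$ would contradict minimality; by the previous step $|\phi|$ is then radial, radially decreasing, smooth and strictly positive. It remains to show that any two such minimizers $\phi_1,\phi_2$ coincide. The potentials $U_i:=V_{\phi_i^2}$ and the pairs $(\phi_i,U_i)$ satisfy the coupled radial Euler--Lagrange system
\begin{equation*}
-\phi''-\tfrac{2}{r}\phi'+\lambda\phi = 8\pi U\phi, \qquad -U''-\tfrac{2}{r}U' = 4\pi\phi^2,
\end{equation*}
on $(0,R)$ with $\phi'(0)=U'(0)=0$, $\phi(R)=U(R)=0$, the normalization $\|\phi\|_2=1$, and $\lambda$ a free spectral parameter. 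I would attempt a shooting argument in the phase plane parametrized by $(\phi(0),U(0),\lambda)$, exploiting the positivity and monotonicity of the Dirichlet Green's function and of any positive radial decreasing solution, and, if needed, a Picone-type identity comparing two candidate trajectories to preclude nontrivial crossings. The hardest point I foresee is establishing uniform control of the sensitivity of the trajectory with respect to the three parameters up to the boundary $r=R$.
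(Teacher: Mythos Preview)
Your existence and regularity steps match the paper. Two smaller points on the qualitative part: invoking Brothers--Ziemer for the equality case of P\'olya--Szeg\H{o} requires the a~priori hypothesis $|\{|\nabla\phi^*|=0\}\cap\{0<\phi^*<\operatorname{ess\,sup}\phi^*\}|=0$, which you have not established at that stage. The paper avoids this by using instead the \emph{strict} form of Talenti's inequality (Alvino--Lions--Trombetti) on the interaction term, obtaining $W_R(|\phi|)<W_R(\phi^*)$ unless $|\phi|=\phi^*$, with no side condition. Also, equality in $\int|\nabla\phi|^2\ge\int|\nabla|\phi||^2$ does not by itself force a \emph{constant} phase; the paper argues that both $\phi$ and $|\phi|$ are eigenfunctions of the Schr\"odinger operator $-\Delta-2V_\phi$, and strict positivity of $|\phi|$ makes the corresponding eigenvalue the (simple) ground state.

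The genuine gap is uniqueness. Your three-parameter shooting in $(\phi(0),U(0),\lambda)$ together with a Picone-type comparison is only a sketch and misses the structural simplification that actually makes the argument work. By Newton's theorem, for radial $\phi$ one has $V_\phi(r)=-U_\phi(r)+\text{const}$ with
\[
U_\phi(r)=4\pi\int_0^r s^2\Bigl(\tfrac{1}{s}-\tfrac{1}{r}\Bigr)\phi(s)^2\,ds,
\]
so the Euler--Lagrange equation becomes a single second-order ODE whose only nonlinear term depends on $\phi$ \emph{causally} (only through its values on $[0,r)$) and \emph{monotonically} in $\phi$. This eliminates your parameter $U(0)$ (one always has $U_\phi(0)=0$) and, when the two eigenvalues $\nu_{\phi_i}$ coincide, yields the explicit identity
\[
\Bigl(\frac{\phi_1}{\phi_2}\Bigr)'(r)=\frac{2}{r^2\phi_2(r)^2}\int_0^r s^2\,\phi_1(s)\phi_2(s)\bigl(U_{\phi_1}(s)-U_{\phi_2}(s)\bigr)\,ds,
\]
from which $\phi_1(0)>\phi_2(0)$ propagates to $\phi_1>\phi_2$ on all of $(0,R)$, contradicting equal $L^2$-norms. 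If $\nu_{\phi_1}\neq\nu_{\phi_2}$, the rescaling $\tilde\phi_2(x)=\lambda^2\phi_2(\lambda x)$ with $\lambda^2=\nu_{\phi_1}/\nu_{\phi_2}$ solves the same equation with eigenvalue $\nu_{\phi_1}$ on the smaller ball $B_{R/\lambda}$, reducing to the previous case and again producing a contradiction with the normalizations. These two ingredients---the causal rewriting via Newton's theorem and the scaling reduction for unequal eigenvalues---are the substance of the uniqueness proof; a generic shooting/Picone scheme does not supply them, and your proposal gives no indication of how to control the three-parameter family up to $r=R$.
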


\begin{theo}
\label{coercivitythm}
For any $R>0$, there exists a $K_R>0$ such that the  coercivity estimate  
\begin{equation}\label{eq:thm2}
\En(\phi)\geq \En(\phi_R)+K_R \min_{\theta\in [0,2\pi)}\int_{B_R} |\nabla (e^{i\theta}\phi_R-\phi)|^2 dx.
\end{equation}
holds for any $L^2$-normalized $\phi \in \Hball$.
\end{theo}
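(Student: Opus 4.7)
The plan is to combine a local second-order expansion of $\En$ around $\minP$ with the spectral theory of the resulting Hessian, and to handle configurations far from $\minP$ by a separate compactness argument based on Theorem~\ref{minimizer}. Given an $L^2$-normalized $\phi\in\Hball$, choose $\theta_*$ so that $\langle e^{i\theta_*}\minP,\phi\rangle_{L^2}$ is real and maximal, and write $e^{-i\theta_*}\phi=\minP+u+iv$ with $u,v$ real-valued. Optimality of $\theta_*$ forces $\int_{B_R}\minP\, v\, dx=0$, while the $L^2$-normalization yields $\int_{B_R}\minP\, u\, dx=-\tfrac12\|u+iv\|_{2}^{2}$, so that $u$ differs from its $L^2$-projection onto $\{\minP\}^{\perp}$ only by a multiple of $\minP$ of size $O(\|u+iv\|_2^2)$.

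Using the Euler--Lagrange equation $-\Delta\minP-8\pi V_{\minP}\minP=\mu_R\minP$ with Lagrange multiplier $\mu_R$ and potential $V_{\minP}(x):=\int_{B_R}\Gker\minP(y)^2\,dy$, a direct expansion then yields
\[
\En(\phi)-\En(\minP)=\langle u,\Lplus u\rangle+\langle v,\Lminus v\rangle+O\bigl(\|u+iv\|_{H^1}^{3}\bigr),
\]
where $\Lminus:=-\Delta-8\pi V_{\minP}-\mu_R$ and
\[
(\Lplus \eta)(x):=(\Lminus \eta)(x)-16\pi\,\minP(x)\int_{B_R}\Gker\minP(y)\eta(y)\,dy,
\]
both regarded as self-adjoint operators on $\Hball$.

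The heart of the argument is the spectral analysis of $\Lplus$ and $\Lminus$. For $\Lminus$ one has $\Lminus\minP=0$, so by Perron--Frobenius the strictly positive minimizer $\minP$ is the unique ground state; compactness of $(-\Delta)^{-1}$ on the bounded domain $B_R$ then upgrades the spectral gap to the $H^1$-coercivity $\Lminus\geq c_{-}(-\Delta)$ on $\{\minP\}^{\perp}$. For $\Lplus$ one computes $\langle \minP,\Lplus\minP\rangle=-16\pi\int_{B_R}V_{\minP}\minP^{2}\,dx<0$, so $\Lplus$ has at least one negative direction; the key statement is the non-degeneracy $\ker \Lplus=\{0\}$, from which a Weinstein/min--max argument (essentially verifying $\langle\minP,\Lplus^{-1}\minP\rangle<0$, a property naturally linked to the monotonicity of the Lagrange multiplier $\mu_R(N)$ in the mass $N$) yields strict positivity and $H^1$-coercivity of $\Lplus$ on $\{\minP\}^{\perp}$. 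In contrast to the whole-space problem of \cite{lenzmann2009uniqueness}, the Dirichlet boundary condition on $B_R$ destroys translational invariance, so one expects a trivial kernel rather than the three-dimensional translation kernel; establishing this rigorously, via a spherical-harmonic decomposition of $\Lplus$ and an ODE analysis in each angular sector, is the main technical obstacle.

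Granting the spectral estimates, the expansion yields \eqref{eq:thm2} on an $H^1$-neighbourhood of $\minP$, the cubic remainder being absorbed into the quadratic lower bound once $\|u+iv\|_{H^1}$ is small enough. For the remaining configurations one combines two standard observations: first, the Hardy--Littlewood--Sobolev inequality combined with Sobolev embedding gives $|\En(\phi)+C|\geq\tfrac12\|\nabla\phi\|_2^2$, so that $\En(\phi)-E_R\geq c\|\nabla\phi\|_2^2$ as soon as $\|\nabla\phi\|_2$ is sufficiently large; second, a concentration-compactness argument on $B_R$ together with the uniqueness part of Theorem~\ref{minimizer} shows that every minimizing sequence converges, up to phase, to $\minP$ in $H^1$, so that $\En(\phi)-E_R$ is uniformly bounded below on regions where $\|\nabla\phi\|_2$ is controlled and the optimal-phase $H^1$-distance to $\minP$ is bounded away from zero. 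Combining the three regimes produces the uniform constant $K_R>0$.
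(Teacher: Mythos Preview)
Your overall architecture (local second-order expansion plus a global compactness argument) matches the paper's proof closely, including the sector-by-sector spherical harmonic analysis you correctly identify as the technical heart. The substantive divergence is in how you handle the real part of the perturbation: you propose to establish $\ker\Lplus=\{0\}$ and then apply a Weinstein/Vakhitov--Kolokolov argument via the sign of $\langle\minP,\Lplus^{-1}\minP\rangle$, whereas the paper works directly with $Q\Lplus Q$. Since $Q\Lplus Q\geq 0$ is automatic from the minimality of $\minP$, the paper only needs to determine its kernel, which it does in Propositions~\ref{kerLo} and~\ref{kerLpl}; positivity on $\ran Q$ then follows immediately from the discrete spectrum. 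This sidesteps two hypotheses your route requires but does not verify: that $\Lplus$ has \emph{exactly one} negative eigenvalue, and that $\langle\minP,\Lplus^{-1}\minP\rangle<0$. The link you draw to monotonicity of the Lagrange multiplier in the mass is the standard heuristic, but it is not established here, and on a bounded domain with a modified interaction kernel it would need its own proof.

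There is also a genuine gap in the radial sector. You assert $\ker\Lplus=\{0\}$, but even the paper's Proposition~\ref{kerLo} does not show $\ker\Lplus^{(0)}=\{0\}$; it shows only that $\ker(Q\Lplus^{(0)}Q)=\spn\{\minP\}$, equivalently that $\ker(Q\Lplus^{(0)})\cap\ran Q=\{0\}$. A radial $f$ with $\Lplus f=0$ would lie in $\ker(Q\Lplus^{(0)})$, and the paper's ODE analysis then forces $\langle\minP,f\rangle\neq 0$ unless $f=0$, but this does not by itself rule out such an $f$; one would still need to exclude the possibility that the one-parameter family of solutions of $\Lpplus f=\mu\minP$ (with the additional constraint $\mu=\sigma(f)$ coming from $\Lplus f=0$) hits the Dirichlet boundary condition. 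Your sketch therefore does not close in the $l=0$ sector without further work, while the paper's $Q\Lplus Q$ formulation avoids the issue entirely.
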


The study of this problem is motivated by the recent work \cite{frank2019quantum}, where lower order corrections to the ground state energy of the Fr\"ohlich polaron model in the strong coupling limit are investigated.  In particular, in \cite{frank2019quantum}, Theorem \ref{minimizer} and, in a slightly weaker form,  Theorem \ref{coercivitythm} are taken as  assumptions and are conjectured to hold for a large class of domains (e.g. convex domains). The goal of our work is to show that, at least in the case of balls, these assumptions hold true. 

\begin{rem}
Our results  apply equally if we consider instead of $\En$ the Pekar functional on the full space $\R^3$ {restricted} to $\Hball$. This amounts to considering, for $\phi \in \Hball$ with $\|\phi\|_2=1$, the functional 
\begin{equation}\label{tenergy}
 \tEn(\phi)= \int_{B_R} |\nabla \phi|^2 dx - \int_{B_R}\int_{B_R} \frac {|\phi(x)|^2|\phi(y)|^2}{|x-y|} dx dy\,.
\end{equation}
The necessary modifications in the proofs will be  explained in Remark~\ref{alsootherkernel}. 
\end{rem}

\begin{rem}
In the context of nonlinear Schr\"odinger equations with \emph{local} nonlinearities the non-degeneracy of linearizations is a well known fact (see \cite{weinstein1985modulational}, \cite{chang2007spectra}). Our model does not fall into this category since the linearization we have to deal with has a \emph{non-local} nature. Nevertheless, using similar techniques as the ones used in \cite{lenzmann2009uniqueness} and \cite{wei2009strongly}, the radial symmetry of the problem still allows to conclude non-degeneracy. 
\end{rem}

\section{Existence and Properties of Minimizers}

We start by showing that minimizers exist. This can be done with standard techniques; the proof is actually  easier on balls (because of compactness) than it is on the whole space.
It will be convenient to introduce the notation
\begin{equation}
T_R(\phi)=\int_{B_R}|\nabla\phi|^2 dx, \hspace{7mm} W_R(\phi)=4\pi\int_{B_R} \int_{B_R} \Gker|\phi(x)|^2|\phi(y)|^2 dx dy\,.
\end{equation}

\begin{theo}
\label{existencethm}
For any $R>0$, there exists an $L^2$-normalized $\phi_R\in \Hball$ such that $\En(\phi_R)=E_R$.
\end{theo}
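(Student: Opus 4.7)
The plan is to apply the direct method of the calculus of variations; thanks to the compact Sobolev embedding on the bounded domain $B_R$, the argument is essentially routine. I would take a minimizing sequence $\{\phi_n\}\subset \Hball$ with $\|\phi_n\|_2=1$ and $\En(\phi_n)\to E_R$, extract a weak $H^1_0$ limit, and verify that the limit is $L^2$-normalized and attains the infimum.

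The first task is to show that $E_R>-\infty$ and that the kinetic energy stays bounded along any minimizing sequence. Since the image-charge correction in \eqref{DGF} is pointwise non-negative for $x,y\in B_R$, one has $\Gker\leq \frac{1}{4\pi|x-y|}$, and therefore
\begin{equation*}
W_R(\phi)\leq \int_{B_R}\int_{B_R}\frac{|\phi(x)|^2|\phi(y)|^2}{|x-y|}\,dx\,dy \leq C\,\bigl\||\phi|^2\bigr\|_{6/5}^2 = C\|\phi\|_{12/5}^4
\end{equation*}
by the Hardy--Littlewood--Sobolev inequality. The interpolation $\|\phi\|_{12/5}\leq \|\phi\|_2^{3/4}\|\phi\|_6^{1/4}$, combined with the Sobolev inequality $\|\phi\|_6\leq C\|\nabla\phi\|_2$ and $\|\phi\|_2=1$, then gives $W_R(\phi)\leq C\,T_R(\phi)^{1/2}$. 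Consequently $\En(\phi)\geq T_R(\phi)-C\,T_R(\phi)^{1/2}\geq -C^2/4$, so $E_R$ is finite and $\{T_R(\phi_n)\}$ is uniformly bounded.

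With $\{\phi_n\}$ bounded in $\Hball$, the Rellich--Kondrachov theorem furnishes a subsequence (still denoted $\phi_n$) that converges weakly in $\Hball$ to some $\phi_R$ and strongly in $L^p(B_R)$ for every $p\in [1,6)$. Strong $L^2$ convergence delivers $\|\phi_R\|_2=1$, and weak lower semicontinuity of $\|\nabla\cdot\|_2$ yields $T_R(\phi_R)\leq \liminf T_R(\phi_n)$. For the nonlocal term, the pointwise estimate $\bigl||\phi_n|^2-|\phi_R|^2\bigr|\leq (|\phi_n|+|\phi_R|)\,|\phi_n-\phi_R|$ together with H\"older in $L^{12/5}$ shows $|\phi_n|^2\to|\phi_R|^2$ in $L^{6/5}$; since the bilinear form $(f,g)\mapsto \int\int\Gker f(x)g(y)\,dx\,dy$ is bounded on $L^{6/5}\times L^{6/5}$ by HLS, this forces $W_R(\phi_n)\to W_R(\phi_R)$. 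Combining these, $\En(\phi_R)\leq \liminf \En(\phi_n)=E_R$, and $\phi_R$ is an admissible minimizer.

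The only potential sticking point, namely the passage to the limit in the nonlocal interaction, is benign here because of the compactness of the Sobolev embedding on a bounded domain; this is exactly what makes the existence proof easier on balls than on $\Rtre$, where one would otherwise need concentration--compactness to rule out vanishing or splitting of the minimizing sequence.
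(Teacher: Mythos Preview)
Your proof is correct and follows essentially the same route as the paper: a direct-method argument using the Hardy--Littlewood--Sobolev bound to control $W_R$ by $T_R^{1/2}$, Rellich--Kondrachov compactness to extract a strongly $L^p$-convergent subsequence, weak lower semicontinuity of $T_R$, and continuity of $W_R$ along the sequence. The only cosmetic difference is that the paper establishes $W_R(\phi_n)\to W_R(\phi_R)$ via the $L^2$-boundedness of $(-\Delta_{B_R})^{-1}$ together with $|\phi_n|^2\to|\phi_R|^2$ in $L^2$, whereas you use HLS-continuity of the bilinear form on $L^{6/5}\times L^{6/5}$; both are standard and interchangeable.
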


\begin{proof}
Let $\phi\in \Hball$. By the Hardy--Littlewood--Sobolev, H\"older and Sobolev inequalities, 
\begin{equation}
\label{Wbound}
W_R(\phi)\leq\int_{B_R}\int_{B_R}\frac {|\phi(x)|^2|\phi(y)|^2}{|x-y|}dxdy\leq C\|\phi^2\|_{6/5}^2\leq C \|\phi\|_6\|\phi\|_2^3\leq C \|\nabla\phi\|_2\|\phi\|_2^3
\end{equation}
for suitable constants $C$ (which may take different values at different appearances). 
Hence 
\begin{equation}
\label{Energybound}
\En(\phi)=T_R(\phi)-W_R(\phi)\geq \tfrac 12 \|\nabla \phi\|_2^2-C\|\phi\|_2^6\,.
\end{equation}
We conclude that the functional is bounded from below for $L^2$-normalized functions, and that any minimizing sequence is bounded in $H^1_0(B_R)$. The Rellich--Kondrachov  and Ba\-nach--Alaoglu Theorems allow us to conclude that any minimizing sequence $\phi_{n}$ has a subsequence that converges to some $\phi_R$,  strongly in $L^p(B_R)$ for every $p\in[1,6)$  
and weakly in $\Hball$. 
Hence we have  $\|\phi_R\|_{2}=1$ and, by lower semicontinuity of the norm w.r.t. weak convergence, $T_R(\phi_R)\leq \liminf_{n\to \infty} T_R(\phi_{n})$. Moreover, with $\rho_{n}:=|\phi_{n}|^2$ and  $\rho:=|\phi_R|^2$ we have 
\begin{align}\nonumber
|W_R(\phi_{n})-W_R(\phi_R)|&=|\expval{-\Delta_{B_R}^{-1}}{\rho_{n}}-\expval{-\Delta_{B_R}^{-1}}{\rho}|=|\bra{\rho_n-\rho} \!-\!\Delta_{B_R}^{-1}\ket {\rho_n+\rho}|\\
&\leq C_R \|\rho_n-\rho\|_2\|\rho_n+\rho\|_2\to 0.
\end{align} 
Here, we used that $-\Delta_{B_R}^{-1}$ is a bounded operator (actually compact) on $L^2(B_R)$  and that $\rho_n\to \rho$ in $L^2$. Putting these pieces together, we  conclude that $\phi_R$ is a minimizer, since
\begin{equation}
\En(\phi_R)\leq \liminf_{n\to \infty}\En(\phi_{n})=E_R, \hspace{3mm} \phi_R\in H^1_0(B_R)\hspace{3mm} \text{and} \hspace{3mm} \|\phi_R\|_{L^2(B_R)}=1\,.
\end{equation}
\end{proof}

\begin{rem}\label{existencealldomains}
We point out  that this proof extends \emph{verbatim} to any bounded domain, the fact that we are working on $B_R$ does not play any role. This is  not true for the uniqueness statements that will come in the next sections, however. 
\end{rem}

Having established existence, we proceed to investigate  properties of  minimizers. 
\begin{lem}
	\label{regularity}
Let $\phi\in \Hball$, $\|\phi\|_{2}=1$ and $\En(\phi)=E_R$. Then $\phi$ satisfies the equation
\begin{equation}
\label{ELeq}
\left(-\Delta - e_{\phi} -2V_{\phi}\right)\phi=0
\end{equation}
on $B_R$, with 
\begin{equation}\label{def:ep}
e_{\phi}:=T_R(\phi)-2W_R(\phi)
\end{equation}
and
\begin{equation}\label{def:Vp}
V_{\phi}(x):=4\pi\int_{B_R} \Gker |\phi(y)|^2 dy\,.
\end{equation}
Moreover, $\phi \in C^{\infty}(B_R)$ and if $\phi\geq 0$ then $\phi>0$ on $B_R$.
\end{lem}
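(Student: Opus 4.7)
The plan is to derive the Euler--Lagrange equation by a Lagrange-multiplier argument, then upgrade $\phi$ to $C^\infty(B_R)$ by a standard elliptic bootstrap, and finally obtain strict positivity from the strong minimum principle.

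First I would take the first variation of $\En$ at $\phi$ in an arbitrary direction $\eta\in\Hball$. The quadratic term produces $2\,\mathrm{Re}\int_{B_R}\nabla\bar\phi\cdot\nabla\eta\,dx$, while $W_R$, after using the symmetry of $\Gker$ in $x$ and $y$ to combine its two equal contributions, yields $4\,\mathrm{Re}\int_{B_R}V_\phi\,\bar\phi\,\eta\,dx$. Stationarity subject to $\|\phi\|_2=1$ introduces a real Lagrange multiplier $\mu$ and gives $-\Delta\phi-2V_\phi\phi=\mu\phi$ in the distributional sense on $B_R$. Pairing this equation with $\phi$ and using $\|\phi\|_2=1$ identifies $\mu=T_R(\phi)-2W_R(\phi)=e_\phi$, which is exactly \eqref{ELeq}.

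For the smoothness statement I would run a bootstrap. From $\phi\in\Hball\hookrightarrow L^6(B_R)$ one has $|\phi|^2\in L^3(B_R)$. Since $V_\phi$ solves $-\Delta V_\phi=4\pi|\phi|^2$ in $B_R$ with vanishing Dirichlet data, Calder\'on--Zygmund estimates together with Sobolev embedding place $V_\phi$ in $C^{0,\alpha}_{\mathrm{loc}}(B_R)\subset L^\infty_{\mathrm{loc}}(B_R)$. Interior $L^p$ theory applied to $-\Delta\phi=(e_\phi+2V_\phi)\phi$ then yields $\phi\in W^{2,p}_{\mathrm{loc}}(B_R)$ for every finite $p$, and iterating the regularity improvements back and forth between the equations for $\phi$ and $V_\phi$ (or invoking Schauder estimates) bootstraps to $\phi\in C^\infty(B_R)$.

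Finally, if $\phi\geq 0$, I rewrite \eqref{ELeq} as $(-\Delta+c)\phi=0$ on $B_R$ with $c:=-e_\phi-2V_\phi$ locally bounded. Since $\|\phi\|_2=1$ excludes $\phi\equiv 0$, the strong minimum principle for uniformly elliptic operators with a bounded zero-order coefficient forces $\phi>0$ throughout $B_R$. I do not expect any genuine obstruction along the way; the only mild points of care are keeping the factor of $2$ in front of $V_\phi$, which arises from the two symmetric contributions produced when $W_R$ is linearized, and checking that $V_\phi$ inherits enough regularity from the smoothing properties of $(-\Delta_{B_R})^{-1}$ for both the bootstrap and the maximum principle to apply.
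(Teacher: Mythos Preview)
Your proof is correct and follows the same overall strategy as the paper: derive the Euler--Lagrange equation, bootstrap to $C^\infty$, then argue strict positivity. The technical implementations differ slightly in two places. For regularity, the paper observes in one step that $V_\phi\in L^\infty(B_R)$ globally, since $y\mapsto\Gker$ is bounded in $L^2(B_R)$ uniformly in $x$ and $|\phi|^2\in L^2$ by Sobolev; it then writes $\phi=(-\Delta_{B_R}+\lambda)^{-1}\big((\lambda+e_\phi+2V_\phi)\phi\big)$ and bootstraps via the smoothing of the resolvent, whereas your Calder\'on--Zygmund/Schauder route reaches the same conclusion with slightly heavier machinery. For strict positivity, the paper reuses the same resolvent identity, choosing $\lambda>-e_\phi$ so the right-hand side is nonnegative and invoking that $(-\Delta_{B_R}+\lambda)^{-1}$ is positivity \emph{improving}; your appeal to the strong minimum principle is the classical PDE counterpart of exactly this fact. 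Both approaches are standard and equally valid here.
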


\begin{proof}
Eq.~\eqref{ELeq} is  the Euler--Lagrange equation associated to our minimization problem and its derivation is standard. Since $\phi\in \Hball$, $|\phi|^2$ is in $L^2$ (by Sobolev embeddings). Moreover, the function $y\mapsto  \Gker$ is bounded in $\Lball$ uniformly in $x$. We can  thus  conclude that $V_{\phi} \in L^{\infty}(B_R)$. Since $\phi$ solves \eqref{ELeq} and is in $\Hball$,  it satisfies
\begin{equation}\label{concl}
\phi(x)=\int_{B_R} (-\Delta_{B_R} + \lambda)^{-1}(\lambda+ e_{\phi}+2V_{\phi}(y))\phi(y) dy
\end{equation}
for any $\lambda> -\infspec(-\Delta_{B_R})$, 
and by bootstrapping we can conclude that $\phi\in C^{\infty}(B_R)$. 
Finally, suppose $\phi\geq 0$.
Choosing $\lambda> -e_\phi$ and 
exploiting the fact that  $(-\Delta_{B_R} + \lambda)^{-1} $ is  positivity improving, \eqref{concl} implies that $\phi>0$. 
\end{proof}

Next we shall exploit the radial symmetry of the problem. Similarly to  \cite{lieb1977existence}, we will make use of 
the tool of symmetric decreasing rearrangement \cite[Chapter 3]{lieb2001analysis}. For any measurable positive function $f$, we will denote its symmetric decreasing rearrangement as $f^*$. If $f$ is complex-valued, we will denote $f^*=|f|^*$. We recall the following Theorem, known as Talenti's Inequality  \cite{talenti1976elliptic}. In the strict form stated here, it is proved in \cite[Theorem 3]{alvino1986remark} (see also \cite{kesavan1988some} and \cite{kesavan1991comparison}). The result in \cite[Theorem 3]{alvino1986remark} is actually more general, but for simplicity we only state the version needed for our purposes.  

\begin{theo}[Talenti's Inequality]
\label{Talenti}
Let $0\leq f\in L^2(B_R)$, and let $u,v \in H^1_0(B_R)$ solve
\begin{equation}\label{eq:talenti}
\begin{cases}
-\Delta u=f \hspace{10pt} &x\in B_R,\\
u=0 \hspace{10pt} &x\in \partial B_R,
\end{cases}
\quad \text{and} \ 
\begin{cases}
-\Delta v=f^* \hspace{10pt} &x\in B_R,\\
v=0 \hspace{10pt} &x\in \partial B_R.
\end{cases}
\end{equation}
Then $u^*\leq v$ a.e. in $B_R$.
If additionally  $u^*(x_0)=v(x_0)$ for some $x_0$ with ${|x_0|=t\in(0,R)}$, then $u(x)=v(x)$ and $f(x)=f^*(x)$ for all $x$ with $t\leq|x|\leq R$.
\end{theo}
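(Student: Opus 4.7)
The plan is to follow the classical distribution-function approach to Talenti's inequality. Since $f \geq 0$, the weak maximum principle forces $u \geq 0$ on $B_R$, so its symmetric decreasing rearrangement $u^*$ is a bona fide nonnegative radial decreasing function. For $t > 0$, I would set $\Omega_t = \{u > t\}$, $\mu(t) = |\Omega_t|$, and let $r(t)$ be determined by $|B_{r(t)}| = \mu(t)$. The strategy is to derive an ordinary differential inequality for $\mu$ and then compare it with the exact ODE satisfied by the distribution function of the radial function $v$.

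On each regular level of $u$, I would combine four standard ingredients. Integrating $-\Delta u = f$ over $\Omega_t$ and applying the divergence theorem gives
\begin{equation*}
\int_{\partial \Omega_t} |\nabla u|\, d\mathcal{H}^2 = \int_{\Omega_t} f\, dx.
\end{equation*}
The co-area formula gives $-\mu'(t) = \int_{\partial \Omega_t} |\nabla u|^{-1} d\mathcal{H}^2$. The Cauchy--Schwarz inequality yields $\mathcal{H}^2(\partial \Omega_t)^2 \leq \left(\int_{\partial \Omega_t}|\nabla u|\right)\!\left(\int_{\partial \Omega_t}|\nabla u|^{-1}\right)$, and the sharp isoperimetric inequality in $\R^3$ gives $\mathcal{H}^2(\partial \Omega_t) \geq (36\pi)^{1/3} \mu(t)^{2/3}$. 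Combining these and using the Hardy--Littlewood rearrangement inequality to bound $\int_{\Omega_t} f \leq \int_{B_{r(t)}} f^*$ produces
\begin{equation*}
(36\pi)^{2/3}\, \mu(t)^{4/3} \leq (-\mu'(t)) \int_{B_{r(t)}} f^*.
\end{equation*}
For $v$, the radial symmetry together with $f = f^*$ makes every one of the four inequalities an equality, so the analogous relation holds as an equality for the distribution function of $v$. Writing both sides in terms of the decreasing radial profiles and integrating from $t$ up to $\sup u^*$ (using that both $u^*$ and $v$ vanish on $\partial B_R$) yields $\mu(t) \leq \nu(t)$, which is equivalent to the pointwise comparison $u^*(x) \leq v(x)$.

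For the rigidity part, the assumption $u^*(x_0) = v(x_0)$ at some $|x_0| = t_0 \in (0, R)$ forces $\mu(t) = \nu(t)$ for every $t$ in the range of $u^*$ on $[|x_0|, R]$, which in turn forces equality in each of the four ingredients above at almost every relevant level. Cauchy--Schwarz equality gives that $|\nabla u|$ is constant on each such level set; the isoperimetric equality forces each level set to be a ball (up to a null set); and the Hardy--Littlewood equality identifies $f$ with $f^*$ on the corresponding super-level set. Piecing these together on the annulus $\{t_0 \leq |x| \leq R\}$ yields $u = v$ and $f = f^*$ there.

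The main obstacle is the rigidity half. The comparison $u^* \leq v$ is by now standard, but the sharp equality case --- in particular the localization to the annulus prescribed by $x_0$ rather than to all of $B_R$ --- requires the sharp characterization of equality in the isoperimetric inequality within the class of sets of finite perimeter, plus the non-trivial observation that equality in Hardy--Littlewood at every level actually forces $f = f^*$ pointwise on the relevant set, not merely equality of integrals. For these measure-theoretic subtleties I would invoke the refined argument of \cite{alvino1986remark} rather than reproving them from scratch.
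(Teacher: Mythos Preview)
The paper does not give its own proof of this theorem; it simply quotes the result from the literature, citing \cite{talenti1976elliptic} for the comparison $u^*\leq v$ and \cite[Theorem~3]{alvino1986remark} for the strict (rigidity) statement. Your sketch is precisely the classical Talenti argument underlying those references --- level sets, co-area, Cauchy--Schwarz, isoperimetry, Hardy--Littlewood, then an ODE comparison --- and it is correct in outline. Since you, too, ultimately defer the equality analysis to \cite{alvino1986remark}, your proposal and the paper end up in exactly the same place: the result is imported rather than proved.

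One small remark on the write-up: the sentence ``integrating from $t$ up to $\sup u^*$ \ldots\ yields $\mu(t)\leq\nu(t)$'' blurs two equivalent formulations. The clean way to close the argument is to pass to the inverse variable (volume $s$ rather than height $t$), obtain $-\tfrac{d}{ds}u^*(s)\leq -\tfrac{d}{ds}v(s)$ with common boundary value $0$ at $s=|B_R|$, and integrate in $s$; this gives $u^*\leq v$ directly. The inequality $\mu\leq\nu$ is equivalent to $u^*\leq v$, but it does not drop out of a direct integration in $t$ because the right-hand side of your differential inequality depends on $\mu$ through $r(t)$. This is a matter of exposition, not a gap.
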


With these tools in hand, we can show the following key Proposition, which will be essential to prove  uniqueness of minimizers.  

\begin{prop}
\label{radiality}
Let $\phi\in \Hball$ be a minimizer of $\En$. Then $|\phi|=\phi^*$ and there exists a $\theta\in[0,2\pi)$ such that $\phi=e^{i\theta}|\phi|$.
\end{prop}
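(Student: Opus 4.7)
The plan is to reduce the proposition to two separate claims: that $\En(|\phi|) = \En(\phi)$ with $\phi$ having constant phase, and that $|\phi| = \phi^*$. Both proceed by exhibiting a modification of $\phi$ that strictly lowers the energy unless $\phi$ was already of the desired form, and then using minimality to rule out the strict decrease.

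For the phase claim, I would invoke the diamagnetic inequality $|\nabla|\phi||^2 \leq |\nabla \phi|^2$ a.e., combined with the fact that $W_R$ depends only on $|\phi|^2$. This gives $\En(|\phi|) \leq \En(\phi)$, and since $\||\phi|\|_2 = 1$, minimality forces equality. Thus $|\phi|$ is itself a non-negative minimizer, and by Lemma~\ref{regularity} it is strictly positive on $B_R$. Writing $\phi = |\phi|\, e^{i\theta(x)}$ on $B_R$ and using the pointwise identity $|\nabla\phi|^2 = |\nabla|\phi||^2 + |\phi|^2|\nabla\theta|^2$, the (now known) equality in the diamagnetic inequality forces $\nabla\theta \equiv 0$, so $\theta$ is a constant.

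For $|\phi| = \phi^*$, I would apply Talenti's inequality. Setting $u := (-\Delta_{B_R})^{-1}|\phi|^2$ and $v := (-\Delta_{B_R})^{-1}(\phi^*)^2$, the Hardy--Littlewood rearrangement inequality together with Theorem~\ref{Talenti} yield
\begin{equation*}
\frac{W_R(|\phi|)}{4\pi} \;=\; \int_{B_R} u\,|\phi|^2\,dx \;\leq\; \int_{B_R} u^*(\phi^*)^2\,dx \;\leq\; \int_{B_R} v\,(\phi^*)^2\,dx \;=\; \frac{W_R(\phi^*)}{4\pi}.
\end{equation*}
Combined with $T_R(\phi^*) \leq T_R(|\phi|) = T_R(\phi)$ from the Polya--Szego inequality, this gives $\En(\phi^*) \leq \En(\phi)$. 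By minimality, $\phi^*$ is itself a minimizer, so Lemma~\ref{regularity} applies to it and gives $\phi^* > 0$ on $B_R$; moreover, all of the above inequalities must be equalities.

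The most delicate step is the extraction of pointwise information from equality. From $\int_{B_R}(v-u^*)(\phi^*)^2\,dx = 0$, combined with $v \geq u^*$ pointwise (Talenti) and $(\phi^*)^2 > 0$ on $B_R$, I would conclude $u^* = v$ throughout $B_R$. The strict equality case of Theorem~\ref{Talenti}, applied at any $x_0$ with $|x_0| \in (0,R)$, then yields $|\phi|^2(x) = (\phi^*)^2(x)$ for all $x$ with $|x| \geq |x_0|$; letting $|x_0| \to 0$ gives $|\phi| = \phi^*$ on $B_R$. The main obstacle is thus organizing the proof so that $\phi^*$ is already known to be a minimizer (hence strictly positive by Lemma~\ref{regularity}) before the equality case of Talenti is invoked at arbitrarily small radii; without strict positivity one only gets $u^* = v$ on $\{\phi^* > 0\}$, which a priori might not reach the origin.
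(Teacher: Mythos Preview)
Your proof is correct and follows essentially the same approach as the paper: P\'olya--Szeg\H{o} for $T_R$, Talenti's inequality combined with Hardy--Littlewood for $W_R(|\phi|)\leq W_R(\phi^*)$, and then the equality case of Talenti applied at arbitrarily small radii to force $|\phi|=\phi^*$. Two minor differences are worth noting. For the phase step, you use the equality case of the diamagnetic inequality (which is fine once Lemma~\ref{regularity} gives $\phi\in C^\infty$ and $|\phi|>0$, so that $\theta$ is smooth on the simply connected $B_R$), whereas the paper instead observes that both $\phi$ and $|\phi|$ are eigenfunctions of the Schr\"odinger operator $-\Delta-2V_\phi$ and invokes simplicity of its positive ground state. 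For the support issue you flag at the end, the paper does not bother to first show $\phi^*>0$ on all of $B_R$; it simply uses that the support of $\phi^*$ automatically contains \emph{some} ball about the origin, which already lets one apply the equality case of Theorem~\ref{Talenti} at arbitrarily small $t$ and take $t\to 0$.
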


\begin{proof}
Clearly, for any $\psi\in\Hball$, $W_R(\psi)=W_R(|\psi|)$, and it is easy to see (\cite[Theorem 7.8]{lieb2001analysis}) that $T_R(\psi)\geq T_R(|\psi|)$. Hence, $\En(\psi)\geq \En(|\psi|)$. 
To proceed, we exploit the properties of symmetric decreasing rearrangements. The P\'olya--Szeg\H o  inequality \cite[Lem.~7.17]{lieb2001analysis}  states that 
\begin{equation}
\label{est1}
T_R(|\psi|)\geq T_R(\psi^*).
\end{equation} 
We claim that also
\begin{equation}
\label{est1b}
W_R(|\psi|)\leq W_R(\psi^*), 
\end{equation} 
with equality if and only if $|\psi| =\psi^*$. To see this we define 
\begin{equation}
u(x):=\int_{B_R} \Gker |\psi(y)|^2 dy \hspace{4mm} \text{and} \hspace{4mm} v(x):=\int_{B_R} \Gker \psi^*(y)^2 dy.
\end{equation}
These functions satisfy \eqref{eq:talenti} with $f(x)= |\psi(x)|^2$. 
By Theorem \ref{Talenti}, we  conclude that $u^*\leq v$. Applying first this estimate and then the Hardy--Littlewood rearrangement inequality \cite[Thm.~3.4]{lieb2001analysis}, we obtain
\begin{align}
\label{est2}
W_R(\psi^*)&=4\pi\int_{B_R} \psi^*(x)^2 v(x) dx \geq 4\pi\int_{B_R} \psi^*(x)^2 u^*(x) dx \nonumber\\
&\geq 4\pi \int_{B_R} |\psi(x)|^2 u(x) dx=W_R(|\psi|). 
\end{align}
To have equality in \eqref{est2}, we must have $v=u^*$ on the support  of $\psi^*$, which contains a non-empty ball centered at the origin. Hence the second part of Theorem~\ref{Talenti} implies that $v=u$ and thus $|\psi|=\psi^*$ on $B_R$, as claimed. For any $\psi \in \Hball$, we conclude that $\En(\psi)\geq\En(\psi^*)$, with  equality if and only if $|\psi|=\psi^*$.

If now we take $\phi$ to be a minimizer, we then immediately obtain $|\phi|=\phi^*$. Moreover, by  the previous Lemma, $|\phi|\in C^{\infty}(B_R)$ and $|\phi|>0$. It remains to show that $\phi=e^{i\theta} |\phi|$. This follows from the fact that both $\phi$ and $|\phi|$ are eigenfunctions of the Schr\"odinger operator $-\Delta -2V_{\phi}$. Since the latter function is strictly positive, $e_{\phi}$ must be the ground state energy of this operator, and is a simple eigenvalue. 
\end{proof}

\section{Uniqueness of Minimizers}
\label{sect3}
In the previous section we have shown that any minimizer, up to a multiplication by a constant phase, must be real, strictly positive, $C^{\infty}$ and radial. To show uniqueness of minimizers it is then sufficient to show uniqueness among functions with these properties. The big advantage of this restriction, as already utilized in \cite{lieb1977existence}, is that the Euler--Lagrange equation for minimizers can be written in the following convenient form.

\begin{rem}
Throughout this paper, we shall make a convenient abuse of notation, and write equivalently $\phi(x)$ or $\phi(r)$ if  $\phi$ is a radial function and $x\in\Rtre$ with $|x|=r$.
\end{rem}

\begin{lem}
\label{ELrad}
Let $\phi \in \Hball$ be a radial function with $\|\phi\|_{2}=1$. Then $\phi$ satisfies Eq.~\eqref{ELeq} if and only if $\phi$ satisfies
\begin{equation}
\label{RadialELeq}
\left[-\frac{d^2}{dr^2} - \frac 2 r \frac d {dr} + 2U_{\phi}(r)\right]\phi(r)=\nu_{\phi} \phi(r)\,,
\end{equation}
where
\begin{itemize}
	\item $U_{\phi}(r):=\int_0^r K(r,s)|\phi(s)|^2 ds$, with $K(r,s)=4\pi s^2(\frac 1 s-\frac 1 r)\geq 0$ for $s\leq r$,
	\item $\nu_{\phi}=e_{\phi}+2I(\phi)-\frac 2 R>0$, with ${I(\phi):=\int_{B_R} \frac {|\phi(x)|^2}{|x|} dx}$.
\end{itemize}
\end{lem}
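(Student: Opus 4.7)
The plan is to reduce \eqref{ELeq} for radial $\phi$ to an explicit identity expressing the potential $V_\phi(r)$ in terms of $U_\phi(r)$, $I(\phi)$, and the constant $1/R$. Specifically, once the identity
\begin{equation*}
V_\phi(r) = I(\phi) - U_\phi(r) - \frac{1}{R}
\end{equation*}
is established, \eqref{RadialELeq} follows from \eqref{ELeq} by substitution, using the fact that $-\Delta$ reduces to $-d^2/dr^2 - (2/r)\,d/dr$ on radial functions. The argument is reversible in both directions, yielding the claimed ``if and only if.''

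The heart of the proof is the computation of $V_\phi(r) = \int_{B_R} |\phi(y)|^2/|x-y|\,dy - \int_{B_R} |\phi(y)|^2/\bigl|\tfrac{|y|}{R}x - \tfrac{R}{|y|}y\bigr|\,dy$ for radial $\phi$ and $r = |x|$. The first (Newtonian) piece is classical: averaging $1/|x-y|$ over a sphere of radius $s$ gives $1/\max(r,s)$, so this integral equals $4\pi \int_0^R |\phi(s)|^2 s^2/\max(r,s)\,ds$. For the image piece I would use the factorization $\bigl|\tfrac{|y|}{R}x - \tfrac{R}{|y|}y\bigr| = \tfrac{|y|}{R}\bigl|x - R^2y/|y|^2\bigr|$ and observe that, for $y \in B_R$, the image point $R^2y/|y|^2$ lies outside $\overline{B_R}$, hence farther from the origin than $x$. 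Changing variables to the image sphere and applying the same spherical-mean identity then causes all the $r$-dependence to cancel, yielding
\begin{equation*}
\int_{B_R} \frac{|\phi(y)|^2}{\bigl|\tfrac{|y|}{R}x - \tfrac{R}{|y|}y\bigr|}\,dy \;=\; \frac{4\pi}{R}\int_0^R |\phi(s)|^2 s^2\,ds \;=\; \frac{1}{R},
\end{equation*}
where the last equality uses $\|\phi\|_2 = 1$. Splitting the Newtonian integral into $s < r$ and $s > r$ and comparing with the definitions of $U_\phi$ and $I(\phi)$ produces the displayed identity after a few lines of algebra.

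The strict positivity of $\nu_\phi$ is then a byproduct: by the identity just established, \eqref{RadialELeq} says precisely that $\phi$ is an eigenfunction of $-\Delta + 2U_\phi$ on $\Hball$ with eigenvalue $\nu_\phi$. Since $K(r,s) \geq 0$ implies $U_\phi \geq 0$, taking the $L^2$-pairing with $\phi$ gives $\nu_\phi = \|\nabla \phi\|_2^2 + 2 \int_{B_R} U_\phi |\phi|^2\,dx \geq \|\nabla \phi\|_2^2$, which is strictly positive by the Poincar\'e inequality for Dirichlet data. The only step that is not routine bookkeeping is the $r$-independence of the image integral, which is perhaps unexpected but is exactly what makes the reduction to \eqref{RadialELeq} clean.
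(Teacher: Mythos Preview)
Your proof is correct and follows essentially the same approach as the paper: both reduce to the identity $V_\phi(r)=I(\phi)-U_\phi(r)-1/R$ via Newton's theorem, with the image term collapsing to the constant $1/R$, and both deduce $\nu_\phi>0$ by pairing \eqref{RadialELeq} with $\phi$ and invoking the positivity of $U_\phi$ and of $-\Delta_{B_R}$. Your explicit factorization $\bigl|\tfrac{|y|}{R}x-\tfrac{R}{|y|}y\bigr|=\tfrac{|y|}{R}\bigl|x-R^2y/|y|^2\bigr|$ makes the application of Newton's theorem to the image piece more transparent than the paper's one-line citation, but the content is identical.
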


\begin{proof}
The proof of this Lemma is just a straightforward application of Newton's Theorem \cite[Thm.~9.7]{lieb2001analysis} to the nonlocal term $V_{\phi}$. Indeed, with $r=|x|$ we have
\begin{align}\nonumber
V_{\phi}(x)&=\int_{B_R} \frac{|\phi(y)|^2}{|x-y|}dy - \int_{B_R} \frac {|\phi(y)|^2}{\big| \frac{|y|}{R} x- \frac{R}{|y|} y\big|} dy=\\ \nonumber
&=\frac 1 r \int_{B_r} |\phi(y)|^2 dy +\int_{B_R\setminus B_r} \frac{|\phi(y)|^2}{|y|} dy-\frac 1 R\\
& =-U_{\phi}(r)+I(\phi)-\frac 1 R.
\end{align}
Recalling the original form of the Euler--Lagrange equation \eqref{ELeq}, this identity 
 immediately implies our claim. To show $\nu_{\phi}>0$ one just needs to integrate the equation against $\phi$  and use the positivity of $U_{\phi}$ and of $-\Delta_{B_R}$.
\end{proof}

It is important to note that the nonlocal term $U_{\phi}(x)$  only depends, at a fixed $x$, on the values of $\phi$  on $B_{|x|}$ and \emph{not} on the whole ball $B_R$. By using ODE techniques, as in \cite{lieb1977existence,lenzmann2009uniqueness} (see also \cite{tod1999analytical}), this will allow us to conclude uniqueness of solutions.

\begin{theo}[Uniqueness of minimizers]
\label{uniqueness}
For any $R>0$, there exists a unique positive and $L^2$-normalized minimizer of $\En$.
\end{theo}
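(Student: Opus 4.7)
The plan is to combine the characterization of minimizers from the previous sections with ODE techniques adapted from \cite{lieb1977existence,lenzmann2009uniqueness,tod1999analytical}, exploiting in an essential way the \emph{Volterra} nature of the nonlocality in~\eqref{RadialELeq}.

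By Proposition~\ref{radiality} and Lemma~\ref{regularity}, every minimizer equals $e^{i\theta}\phi$ for some phase $\theta\in[0,2\pi)$ and some positive radial $C^{\infty}$ function $\phi$ satisfying~\eqref{RadialELeq}, so it suffices to show that \eqref{RadialELeq} has at most one positive radial $L^2$-normalized solution. The substitution $u(r):=r\phi(r)$ eliminates the first-order derivative and recasts~\eqref{RadialELeq} as
\begin{equation*}
-u''(r)+2U(r)\,u(r)=\nu\, u(r),\qquad u(0)=u(R)=0,\quad u>0\text{ on }(0,R),
\end{equation*}
with $4\pi\int_0^R u(r)^2\,dr=1$ and $U(r)=4\pi\int_0^r(1/s-1/r)\,u(s)^2\,ds$. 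The decisive feature is that $U(r)$ depends on $u$ only through its restriction to $[0,r]$, so the associated initial value problem with $u(0)=0$, $u'(0)=c$ and fixed $\nu$ is of Volterra type, and a standard Picard--Lindel\"of argument (most cleanly carried out in terms of $\phi=u/r$, which is regular at the origin) produces a unique solution $u_{c,\nu}$ on $[0,R]$. Hence every positive radial minimizer is completely determined by the pair $(c,\nu):=(\phi(0),\nu_{\phi})\in(0,\infty)^2$.

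Uniqueness thus reduces to showing that the conditions $u_{c,\nu}(R)=0$, $4\pi\int_0^R u_{c,\nu}^2\,dr=1$, and $u_{c,\nu}>0$ on $(0,R)$ pick out only one $(c,\nu)$. I would proceed by a two-parameter shooting argument: for each fixed $c>0$, Sturm oscillation theory applied to the one-dimensional Schr\"odinger operator $-d^2/dr^2+2U_{c,\nu}(\cdot)$ on $(0,R)$ (whose potential is determined once the IVP is solved) should single out a unique $\nu=\nu(c)$ for which the first positive zero of $u_{c,\nu(c)}$ is exactly at $R$; then strict monotonicity of the map $c\mapsto 4\pi\int_0^R u_{c,\nu(c)}(r)^2\,dr$ will determine $c$.

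The main obstacle is precisely this last monotonicity, because varying $c$ changes both the solution and its own effective potential $U$ at once, so direct Sturm comparison is not available. I would attack it through a Wronskian comparison between two hypothetically distinct positive normalized minimizers $u_1,u_2$: a direct computation gives $W(r):=u_1(r)u_2'(r)-u_1'(r)u_2(r)$ with $W(0)=W(R)=0$ and
\begin{equation*}
W'(r)=u_1(r)u_2(r)\bigl[\,2(U_{\phi_2}(r)-U_{\phi_1}(r))-(\nu_2-\nu_1)\,\bigr],
\end{equation*}
which, combined with the shared $L^2$-normalization, the strict positivity and $r$-monotonicity of $U_{\phi}$, and the one-sided character of the nonlocality, should force $(c_1,\nu_1)=(c_2,\nu_2)$; the IVP uniqueness then yields $\phi_1\equiv\phi_2$, extended by a standard connectedness argument on $\{r\in[0,R]:\phi_1\equiv\phi_2\text{ on }[0,r]\}$.
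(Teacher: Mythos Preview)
Your reduction to positive radial solutions and the IVP uniqueness via the Volterra structure match the paper exactly, and your Wronskian identity is in fact equivalent to the paper's key formula: since $u_i=r\phi_i$, one has $(\phi_1/\phi_2)'=-W/u_2^2$, and integrating your expression for $W'$ from $0$ recovers the paper's
\[
\left(\frac{\phi_1}{\phi_2}\right)'(r)=\frac{2}{r^2\phi_2^2(r)}\int_0^r s^2\phi_1(s)\phi_2(s)\bigl[U_{\phi_1}(s)-U_{\phi_2}(s)\bigr]\,ds
\]
in the case $\nu_1=\nu_2$. What you have not supplied is the actual argument that closes the loop. Two concrete gaps:

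\emph{First}, the Sturm step is circular as written. The potential $U_{c,\nu}$ is not fixed once $(c,\nu)$ is fixed in the sense required by Sturm oscillation theory: it is determined by the solution $u_{c,\nu}$ itself, so you are not comparing eigenfunctions of a single linear operator as $\nu$ varies but solutions of different nonlinear problems. One can try to prove monotonicity of the first zero in $\nu$ directly, but that is again a comparison of two self-consistent solutions and lands you back at the same difficulty you flag for the $c$-variation.

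\emph{Second}, and more seriously, your Wronskian scheme does not handle $\nu_1\neq\nu_2$. From $W(0)=W(R)=0$ you only get the single integral identity $2\int_0^R u_1u_2(U_{\phi_2}-U_{\phi_1})=(\nu_2-\nu_1)\int_0^R u_1u_2$, and it is not clear how this, together with equal $L^2$-norms, forces equality; the sign of $U_{\phi_2}-U_{\phi_1}$ is not controlled globally. The paper sidesteps this entirely by a scaling trick you are missing: if $\nu_1>\nu_2$, set $\lambda=\sqrt{\nu_1/\nu_2}>1$ and $\tilde\phi_2(x):=\lambda^2\phi_2(\lambda x)$ on $B_{R/\lambda}$. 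Then $\tilde\phi_2$ solves \eqref{RadialELeq} with eigenvalue $\lambda^2\nu_2=\nu_1$, so one is back in the equal-$\nu$ case on the smaller ball. There the ratio formula above, bootstrapped from $\phi_1(0)\neq\tilde\phi_2(0)$ via the monotone dependence of $U_\phi$ on $\phi\restr{[0,r]}$, forces $\phi_1>\tilde\phi_2$ (or $<$, or $\equiv$) on all of $B_{R/\lambda}$; each alternative contradicts $\|\tilde\phi_2\|_2=\sqrt\lambda>1=\|\phi_1\|_2$ together with $\tilde\phi_2(R/\lambda)=0<\phi_1(R/\lambda)$. This rescaling, which exploits the exact homogeneity of the nonlinearity, is the missing idea; once you have it, the shooting and Sturm layers become unnecessary.
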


\begin{proof}
From Lemma \ref{regularity} and Proposition \ref{radiality} we deduce that any positive minimizer is in $C^{\infty}(B_R)$, is radially decreasing and strictly positive. Moreover, by the previous Lemma, it  satisfies \eqref{RadialELeq}.  Suppose that $\phi_1$ and $\phi_2$ are two distinct positive $L^2$-normalized minimizers. We distinguish two cases: $\nu_{\phi_1}$ and $\nu_{\phi_2}$ can either be equal (first case) or not (second case). 

\textit{First case:} Note that  $\phi_i'(0)=0$ for $i\in\{1,2\}$,  since $\phi_i$ is smooth and radial. If $\phi_1(0)=\phi_2(0)$ it follows from standard fixed point arguments (explained for completeness in Appendix~\ref{appA}) that $\phi_1=\phi_2$ on $B_R$. W.l.o.g. we can hence suppose that $\phi_1(0)>\phi_2(0)$. By integrating the Euler--Lagrange equation using that $\phi_i'(0)=0$, we find
\begin{equation}
\label{ratioest}
\left(\frac {\phi_1}{\phi_2}\right)'(r)=\frac 2 {r^2\phi_2^2(r)}\int_0^r s^2\phi_1(s)\phi_2(s)\left[U_{\phi_1}(s)-U_{\phi_2}(s)\right]ds.
\end{equation}
Exploiting the fact that $U_{\phi}(s)$ only depends on the values of $\phi$ in $[0,s)$, and it does so  \emph{monotonically}, we conclude that if $\phi_1>\phi_2$ on $[0,t)$ for some $t>0$, then $(\phi_1/\phi_2)'(t)>0$. This readily implies that $\phi_1>\phi_2$ on $B_R$, which is  a contradiction to our assumption that both functions are $L^2$-normalized.  

\textit{Second case:} W.l.o.g. we assume that $\nu_{\phi_1}>\nu_{\phi_2}>0$. 
Let $\lambda=\sqrt{\nu_{\phi_1}/\nu_{\phi_2}}>1$ and consider the function $\tilde{\phi_2}(x):=\lambda^2\phi_2(\lambda x)$ defined on $B_{R/\lambda}\subset B_R$. Its $L^2$-norm equals $\sqrt{\lambda}>1$ and it  satisfies
\begin{equation}
\left[-\frac{d^2}{dr^2} - \frac 2 r \frac d {dr} + 2U_{\tilde{\phi_2}}(r)\right]\tilde{\phi_2}(r)=\lambda^2\nu_{\phi_2} \tilde{\phi_2}(r)=\nu_{\phi_1} \tilde{\phi_2}(r)
\end{equation}   
on $B_{R/\lambda}$. 
Hence $\phi_1$ and $\tilde{\phi_2}$ satisfy the equation with same eigenvalue on $B_{R/\lambda}$ and we have reduced the problem to the first case. In particular, we have that either $\phi_1>\tilde{\phi_2}$ or $\phi_1<\tilde{\phi_2}$ or $\phi_1=\tilde{\phi_2}$ on the whole of $B_{R/\lambda}$. Each of these possibilities yields a contradiction, however, since $\tilde{\phi_2}$ has $L^2$-norm strictly larger than $\phi_1$ and is supported on a smaller ball. 
\end{proof}

In combination with Prop.~\ref{radiality}, Thm.~\ref{uniqueness} proves Thm.~\ref{minimizer}.

The unique positive minimizer will henceforth be denoted by $\phi_R$. 
It is natural to expect that, as $R\to \infty$, it converges  to a minimizer of the problem on the full space $\Rtre$. This is indeed the case, as detailed in Appendix~\ref{appB}. 

While the proof of existence of minimizers extends to general domains in $\Rtre$, 
as discussed in Remark \ref{existencealldomains}, the proof of uniqueness relies heavily on  symmetric decreasing rearrangement and hence cannot be easily generalized.  Extending the uniqueness result to more general domains is hence an open problem. As the following counterexample  shows, uniqueness can actually fail on particular domains. Nevertheless, we believe that uniqueness holds \emph{generically}, in the sense that if $\Omega$ is any domain for which different minimizers exist, then a generic perturbation of $\Omega$ should still lead to a unique minimizer (up to phase). We conjecture that convexity of $\Omega$ is a sufficient condition to ensure uniqueness. 

\begin{rem}
\label{couterexample}
Consider two disjoint balls of the same size in $\Rtre$, $B_1:=B_R(x_1)$ and $B_2:=B_R(x_2)$, with $|x_1-x_2|>2R$. Let $\Om=B_1\cup B_2$ and consider the Pekar functional defined on $\Om$:
\begin{equation}
\mathscr{E}_{\Om}(\phi)=\int_{\Om} |\nabla \phi|^2 dx - 4\pi\int_{\Om}\int_{\Om} (-\Delta_{\Om})^{-1} (x,y) |\phi(x)|^2 |\phi(y)|^2 dx dy=T_{\Omega}(\phi)-W_{\Omega}(\phi).
\end{equation}
Here $(-\Delta_{\Om})^{-1} (x,y)$ denotes  the integral kernel of the inverse Dirichlet Laplacian on $\Omega$. Any  $L^2$-normalized $\phi\in H^1_0(\Om)$ can be written as $\phi=\sqrt{t} \phi_1+\sqrt{1-t} \phi_2$, for some $t\in [0,1]$ and $L^2$-normalized $\phi_1\in H^1_0(B_1)$, $\phi_2\in H^1_0(B_2)$.  For general functions $f_1,f_2$, 
\begin{align}\nonumber
&\expval{-\Delta_{\Omega}^{-1}}{tf_1+(1-t)f_2}\\
&= t \expval{-\Delta_{\Omega}^{-1}}{f_1}+(1-t)\expval{-\Delta_{\Omega}^{-1}}{f_2}-t(1-t)\expval{-\Delta_{\Omega}^{-1}}{f_1-f_2}.
\end{align} 
By the positivity of $-\Delta_{\Omega}^{-1}$ as an operator, the last term  is strictly negative unless $t\in\{0,1\}$ or $f_1=f_2$. In other words, $\expval{-\Delta_{\Omega}^{-1}}{\cdot}$ is strictly convex, which holds true for general $\Omega$, in fact. 
In particular
\begin{align}\nonumber
\mathscr{E}_{\Om}(\phi)&=t \int_{B_1} |\nabla \phi_1|^2 dx + (1-t) \int_{B_2} |\nabla \phi_2|^2 dx -W_{\Omega}\left(\sqrt{t}\phi_1+\sqrt{1-t} \phi_2\right) \\
&\geq t\mathscr{E}_{B_1}(\phi_1)+(1-t) \mathscr{E}_{B_2}(\phi_2) \geq E_R 
\end{align}
and the first inequality is strict unless $t=0$ or $t=1$. We conclude that any minimizer of $\mathscr{E}_\Omega$ is obtained by translating  a minimizer of $\En$ by $x_1$ \emph{or} $x_2$. In particular, uniqueness up to phase does not hold on $\Om$. 

The fact that $\Om$ has two distinct connected components is not essential in our argument. The lack of uniqueness would still hold, by continuity, if $B_1$ and $B_2$ were connected by a sufficiently narrow corridor, respecting the symmetry between the two balls.  On the other hand,  a \emph{generic} perturbation of $\Om$ (or of $\Om$ connected by a corridor) would restore uniqueness up to phase of minimizers, since it would break the symmetry.
\end{rem}

\section{Study of the Hessian}
\label{sect4}

Recall that for given $R>0$, $\minP$ denotes the unique $L^2$-normalized positive minimizer of $\En$ on $B_R$. In this section we study the Hessian of $\En$ at $\minP$, following ideas in~\cite{lenzmann2009uniqueness} (see also \cite{wei2009strongly}). Let $\phi$ be any function in $\Hball$. A straightforward computation shows that 
\begin{equation}
\En\left(\frac{\minP+\varepsilon\phi}{\|\minP+\varepsilon\phi\|_2}\right)=\En(\minP)+\varepsilon^2 \HP(\phi)+O(\varepsilon^3)
\end{equation}
as $\varepsilon\to 0$, where
\begin{equation}
\label{Hessian}
\HP(\phi)=\expval{L_{-}}{\Im(\phi)}+\expval{QL_+Q}{\Re(\phi)},
\end{equation}
$Q=\id- |\phi_R\rangle\langle\phi_R|$, and the operators $L_\pm$ are given by
\begin{equation}\label{def:lpm}
L_- := - \Delta_{B_R} - 2V_{\minP}-e_{\minP} \ , \quad L_+ = L_- - 4 X
\end{equation}
with 
\begin{equation}
(Xf)(x):=4\pi\minP(x)\int_{B_R}\Gker \minP(y)f(y) dy\,.
\end{equation}
We recall that $e_{\minP}=T_R(\minP)-2W_R(\minP)$. Moreover $V_\phi$ is defined in \eqref{def:Vp}. 
Since $\phi_R$ is smooth, it is not difficult to see that both $V_{\minP}$ and $X$ are bounded operators. In particular, the domain of $L_\pm$ equals the domain of $\Delta_{B_R}$, namely $H^2(B_R)\cap H^1_0(B_R)$.
Using \eqref{DGF}, we find it convenient to decompose $X$ as $X=X_1 - X_2$ with
\begin{equation}\label{def:x12}
(X_1f)(x):=\minP(x)\int_{B_R}\frac {\minP(y)f(y)}{|x-y|} dy \ , \quad (X_2f)(x):=\minP(x) \int_{B_R}\frac{\minP(y)f(y)}{\big| \frac{|y|}{R} x- \frac{R}{|y|} y\big|} dy\,.
\end{equation}

Note that $\minP\in \ker \Lminus$ by the Euler--Lagrange equation \eqref{ELeq}. Since $Q\phi_R =0$, clearly also $\minP\in \ker Q\Lplus Q$. Our aim is to show that $0$ is a simple eigenvalue for both $\QLplus$ and $\Lminus$. This will imply the strict positivity of the Hessian on $\ran Q$. Indeed, by minimality of $\minP$, both operators are non-negative and, since the domain under consideration is  bounded, have  compact resolvents and discrete spectrum. 

The simplicity of $0$ as an eigenvalue of $\Lminus$ follows from the fact that  $\Lminus$ is a Schr\"odinger operator with $\infspec \Lminus=0$ (since the corresponding eigenfunction $\minP$ is positive). 
Note that the non-triviality of $\ker \Lminus$ is a consequence of the $U(1)$-symmetry of $\En$ leading to uniqueness \emph{up to phase} of the minimizer only. Indeed, purely imaginary perturbations of $\minP$ by functions in $\spn\{\minP\}$  correspond to phase rotations of $\minP$.

The analysis of  $\ker\QLplus$ is more tricky.  The presence of the projection $Q$ does not allow the use of standard arguments   to show simplicity of the least eigenvalue based  on positivity. It will be essential to utilize that 
$\Lplus$ commutes with rotations. We recall that
\begin{equation}
L^2(B_R)=\bigoplus_{l=0}^{\infty} \mathscr{H}_l,
\end{equation}
where $\mathscr{H}_l:=L^2([0,R],r^2dr)\otimes \mathscr{Y}_l$, $\mathscr{Y}_l=\spn\{Y_{lm}\}_{m=-l}^{l}$ is the $(2l+1)$-dimensional eigenspace corresponding to the eigenvalue $l(l+1)$ of the negative spherical Laplacian on $L^2(\mathbb{S}^2)$ and $Y_{lm}$ is the $m$-th spherical harmonic of angular momentum $l$. The fact that $\Lplus$ commutes with rotations implies that $\Lplus$ acts invariantly on each $\mathscr{H}_l$, i.e., it can be decomposed as
\begin{equation}
\Lplus=\bigoplus_{l=0}^{\infty} \Lplus\restr{\mathscr{H}_l}=:\bigoplus_{l=0}^{\infty} \Lplus^{(l)}\,.
\end{equation}
Since  $\minP$ is radial, also $Q$ leaves each $\mathscr{H}_l$ invariant (in particular $Q\restr{\mathscr{H}_l}=\id$ if $l\geq1$), hence 
\begin{equation}
\QLplus=\bigoplus_{l=0}^{\infty}(Q \Lplus Q)\restr{\mathscr{H}_l}=\left(QL_{+}^{(0)}Q \right)\oplus \left(\bigoplus_{l=1}^{\infty} \Lplus^{(l)}\right)\,.
\end{equation}
Identifying the kernel of $\QLplus$ is equivalent to identifying the kernels of $Q\Lplus^{(0)}Q\restr{\mathscr{H}_0}$ and of $\Lplus^{(l)}$ for $l\geq 1$. We start with the study of $Q\Lplus^{(0)}Q$, the only operator in which $Q$ still appears, complicating the analysis. The operators $\Lplus^{(l)}$, in which $Q$ does not appear, will be studied with more standard arguments below.

\begin{prop}\label{kerLo}
\begin{equation}
\ker \left(Q\Lplus^{(0)}Q\right)=\ker Q=\spn\{\minP\}.
\end{equation}
\end{prop}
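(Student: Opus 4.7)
The inclusion $\spn\{\minP\}\subseteq\ker(QL_+^{(0)}Q)$ is immediate from $Q\minP=0$. For the converse, suppose $f\in\mathscr{H}_0$ satisfies $QL_+^{(0)}Qf=0$, and set $g:=Qf\in\mathscr{H}_0\cap\{\minP\}^\perp$. The hypothesis reads $L_+g=c\minP$ for some $c\in\mathbb{R}$, and my goal is to show $g=0$; once this is done, $f=\langle\minP,f\rangle\minP\in\spn\{\minP\}$.

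The plan is to pin down enough of the spectrum of the self-adjoint operator $L_+^{(0)}$ on $\mathscr{H}_0$ (which has compact resolvent, since $B_R$ is bounded, hence purely discrete spectrum) to conclude that the quadratic form $\langle h,L_+h\rangle$ is strictly positive on the codimension-one subspace $\mathscr{H}_0\cap\{\minP\}^\perp\setminus\{0\}$. Combined with the identity $\langle g,L_+g\rangle=c\langle g,\minP\rangle=0$ that follows from the hypothesis, this will force $g=0$.

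The spectral analysis rests on three ingredients. First, $L_+^{(0)}$ has at least one negative eigenvalue: from the Euler--Lagrange equation $L_-\minP=0$ and the definition of $X$ one gets $L_+\minP=-4V_{\minP}\minP$, so $\langle\minP,L_+\minP\rangle=-4W_R(\minP)<0$. Second, a Perron--Frobenius-type argument should give that the infimum of the spectrum is a simple eigenvalue with a strictly positive eigenfunction; indeed $L_-^{(0)}$ is a standard radial Schr\"odinger operator and the non-local perturbation $-4X$ has a positive integral kernel on radial functions, since $\minP>0$ and the kernel $\Gker$ of $(-\Delta_{B_R})^{-1}$ is positivity-preserving. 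Third, and most delicately, I expect $\ker L_+^{(0)}=\{0\}$: unlike in the free-space problem, no symmetry-generated radial zero mode is available on the ball. I plan to prove this by an ODE-type analysis of the radial integro-differential equation $L_+^{(0)}h=0$ in the spirit of \cite{lenzmann2009uniqueness,wei2009strongly}, exploiting Sturm-type comparison with $\minP$ together with the monotone dependence of $U_{\phi}$ on $\phi$ in the interior that already drove the uniqueness proof of Theorem~\ref{uniqueness}.

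Granting these three facts, $L_+^{(0)}$ has exactly one simple negative eigenvalue $\mu_1<0$, strictly positive second eigenvalue $\mu_2>0$, and trivial kernel. A standard Lagrange-multiplier computation then identifies the minimum of $\langle h,L_+h\rangle$ over $L^2$-normalized $h\in\mathscr{H}_0\cap\{\minP\}^\perp$ as the unique root $\lambda^*\in(\mu_1,\mu_2)$ of the monotone function $F(\lambda):=\langle(L_+^{(0)}-\lambda)^{-1}\minP,\minP\rangle$. By minimality of $\minP$, $\lambda^*\ge 0$; the strict inequality $\lambda^*>0$ still needs to be argued, for instance by ruling out a non-trivial $\tilde g\perp\minP$ with $L_+\tilde g\in\spn\{\minP\}$ using the uniqueness Theorem~\ref{uniqueness} and an implicit-function-theorem argument applied to the family of radial EL-solutions parametrised by the Lagrange multiplier $\nu$. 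The main obstacle, which will require the bulk of the work, is proving triviality of $\ker L_+^{(0)}$: the non-locality of the operator $X$ makes standard ODE techniques delicate, so Lenzmann's scheme has to be carefully adapted to the Dirichlet Laplacian kernel \eqref{DGF} on the ball.
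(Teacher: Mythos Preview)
Your framework is standard and the preliminary observations are fine (negative direction $\langle\minP,L_+\minP\rangle<0$, Perron--Frobenius for the bottom of the spectrum), but the reduction does not actually simplify the problem. The Lagrange-multiplier computation identifies the constrained minimum $\lambda^*$ with the root of $F(\lambda)=\langle(L_+^{(0)}-\lambda)^{-1}\minP,\minP\rangle$, and what you need is $F(0)\neq 0$. But $F(0)=0$ is \emph{equivalent} to the existence of a nonzero $g\in\dom L_+^{(0)}$ with $g\perp\minP$ and $L_+g\in\spn\{\minP\}$, which is precisely the statement you set out to disprove. So after the spectral set-up you are back at the starting point. Your proposed way out via an implicit-function/branch argument is where the real difficulty lies, and on the ball it runs into a concrete obstruction: the scaling $\phi\mapsto\lambda^2\phi(\lambda\,\cdot)$ changes the radius, so there is no smooth one-parameter family of radial Euler--Lagrange solutions on the \emph{fixed} domain $B_R$ that you could differentiate. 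The natural scaling generator $\varphi(r)=2\minP(r)+r\minP'(r)$ satisfies $L_+\varphi\in\spn\{\minP\}$ but $\varphi(R)=R\minP'(R)<0$, hence $\varphi\notin\dom L_+^{(0)}$; you cannot use it to compute $\langle L_+^{-1}\minP,\minP\rangle$ directly.

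The paper exploits exactly this boundary mismatch. Via Newton's theorem one rewrites, for radial $f$, $L_+f=\Lpplus f-\sigma(f)\minP$, where the nonlocal part of $\Lpplus$ at radius $r$ depends only on $f|_{[0,r]}$. One then shows, by the same monotonicity device as in the uniqueness proof, that every nontrivial radial solution of $\Lpplus v=0$ (no boundary condition imposed) has a fixed sign and in fact a multiple strictly dominates $\minP$. The scaling generator $\varphi$ is a particular solution of $\Lpplus\varphi=\mu\minP$ with $\mu\neq 0$ (because $\varphi$ changes sign, it cannot lie in $\ker\Lpplus$). Hence any $f$ with $L_+f\in\spn\{\minP\}$ decomposes as $f=v+c\varphi$; imposing $f(R)=0$ forces $c$ to have the same sign as $v$, and then $\langle\minP,f\rangle=\langle\minP,v\rangle+\tfrac{c}{2}\neq 0$ unless $f=0$. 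The point you are missing is that the boundary condition, which kills the scaling family on the ball, is turned into the mechanism that rules out $f\perp\minP$.

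Separately, your plan to prove $\ker L_+^{(0)}=\{0\}$ by ``Sturm-type comparison'' is not fleshed out; note that the paper never isolates this statement, and the naive sign argument only controls solutions of $\Lpplus v=0$, not of $L_+v=0$ (the two differ by the global term $\sigma(v)\minP$). The decomposition above handles both simultaneously.
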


\begin{proof}
Since $\ker(Q\Lplus^{(0)})\cap \ran Q=\{0\}$ implies $\ker Q\Lplus^{(0)}Q\restr{\mathscr{H}_0}=\ker Q$,  our strategy will be to show that $\ker(Q\Lplus^{(0)})$  does not contain any non-null functions that are in $\ran Q$. Since all operators are real (i.e., commute with complex conjugation), it is sufficient to consider real-valued functions. 
We consider a $f\in \dom\Lplus^{(0)}$ (which in particular implies $f\in \mathscr{H}_0$, i.e., $f$ radial) and observe that, by Newton's  Theorem, 
\begin{equation}
(\Lplus f)(r)=(\Lpplus f)(r)-\sigma(f)\minP(r),
\end{equation}
with
\begin{equation}\begin{aligned}
\label{Lppesigma}
&(\Lpplus f)(r):=(\Lminus f)(r)+4\minP(r)\int_{B_r} \left(\frac 1 {|y|}-\frac 1 r\right)\minP(y) f(y) dy,\\
& \sigma(f):=4 \int_{B_R} \left(\frac 1 {|y|}-\frac 1 R\right)\minP(y) f(y) dy \,.
\end{aligned}\end{equation}
Any $f\in\dom \Lplus^{(0)}$ is in $\ker (Q\Lplus^{(0)})$ if and only $\Lplus f=\lambda \minP$ for some $\lambda\in \mathbb{R}$ and, by the previous discussion, this is true if and only if
\begin{equation}
\label{equation1}
\Lpplus f=\mu \minP \hspace{2mm} \text{for some} \hspace{2mm} \mu\in \mathbb{R}. 
\end{equation}

The operator $\Lpplus$ can be naturally defined on the extended domain $H^2(B_R)$ (without Dirichlet boundary conditions at $R$) and it will be convenient to do so in the following. 
From the above discussion we infer that  $f\in\ker(Q\Lplus^{(0)})$ must be of the form $f=v+c\varphi$, with $c\in \R$,  $v$  a solution of $\Lpplus v=0$ and $\varphi$ being a \emph{particular} solution of \eqref{equation1}, with $\mu\neq 0$. While $f$ needs to satisfy Dirichlet boundary conditions, i.e., $f\in H^1_0(B_R)$, this need not be the case for $v$ and $\varphi$ separately, however. 
In the following,  we will exhibit a particular solution $\varphi$ that is \emph{radial}, hence we are only interested in radial solutions of $\Lpplus v=0$.
	
We begin by studying the radial solutions of $\Lpplus v=0$. A bootstrapping argument shows that any such $v$ must be  in $C^{\infty}(B_R)$. Moreover, by Newton's Theorem, $v$ satisfies
\begin{equation}
v''(r)+\frac 2 r v'(r)=a(r)v(r)+b(r),
\end{equation}
where
\begin{equation}
a(r):=-2V_{\minP}(r)-e_{\minP} \  , \quad
b(r):=4\minP(r)\int_{B_r} \left(\frac 1 {|y|}-\frac 1 r\right)\minP(y) v(y) dy\,.
\end{equation}
By the regularity of $v$, we have $v'(0)=0$. Arguing as in the proof of Lemma \ref{localuniqueness}, we see  that the equation possesses no non-trivial  solution that vanishes at the origin.
	
Recall that $\minP$ satisfies  
\begin{equation}
\minP''(r)+\frac 2 r \minP'(r)=a(r)\minP(r)\,.
\end{equation}
By applying the same computations as in the proof of Thm.~\ref{uniqueness}, using $v'(0)=\minP'(0)=0$, we obtain
\begin{equation}
\left(\frac{v}{\minP}\right)'(r)=\frac 1 {r^2\minP^2(r)} \int_0^r s^2b(s)\minP(s) ds\,.
\end{equation} 
Note that $b(r)\geq0$ if $v\geq0 $ in $[0,r)$. Assuming that  $v(0)>\minP(0)$ this implies that $v>\minP$ on $B_R$. In other words, any non-trivial radial solution of $\Lpplus v=0$ has a multiple which is strictly larger than $\minP$ on $B_R$. In particular any non-trivial radial solution must have constant sign.
	
Consider now the radial function $\varphi(r):=2\minP(r)+r \minP'(r)$. We observe that $\varphi\not\in \ran Q$, since  $\bra{\minP}\ket{\varphi}=1/2$ as an argument using integration by parts shows. A straightforward computation shows that $\Lplus \varphi= \lambda \minP$ for some $\lambda\in \mathbb{R}$, which implies that also 
 $\Lpplus \varphi=\mu \minP$ for some $\mu\in\R$. We claim that  $\mu\neq0$, which is an immediate consequence of our previous findings about radial solutions of $\Lpplus v=0$. Indeed, $\varphi(0)>0$ whereas $\varphi(R)<0$ (a proof of this last statement is given in Lemma \ref{signofder} in Appendix~\ref{appA}), hence $\varphi$ does not have constant sign and cannot be in $\ker\Lpplus$. We  conclude that $\varphi$ is a particular solution of \eqref{equation1} and this implies, by the previous discussion, that any $f \in \ker(Q\Lplus^{(0)})$ must be of the form $f=v+c\varphi$, for some $v\in \ker \Lpplus$ and some $c\in \mathbb{R}$. The case $v\equiv 0$ immediately yields $f=0$, since $\varphi$ does not satisfy the boundary condition $f(R)=0$. All the other solutions $v$ have constant sign, thus the boundary condition $f(R)=0$ is satisfied if and only if $c$ has the same sign of $v$. In particular, 
\begin{equation}
\bra{\minP}\ket{f}=\bra{\minP}\ket{v}+\frac c 2 \neq 0
\end{equation}
unless $f=0$, i.e., $f\in \ran Q$ if and only if $f=0$. We conclude  that $\ker Q\Lplus^{(0)}\cap\ran Q=\{0\}$, as claimed.
\end{proof}

We now proceed with the study of $\ker \Lplus^{(l)}$ for $l\geq 1$. We first investigate the explicit expressions of these operators. 
We note that the action of $\Lplus$ is not only invariant on $\mathscr{H}_l=L^2([0,R],r^2dr)\otimes \mathscr{Y}_{l}$, but it also acts as the identity on the second factor. Hence we can identify the operators $\Lplus^{(l)}$ with operators acting  on $L^2([0,R],r^2dr)$ only, which we will denote by the same symbol  for simplicity. 
That is, if $\phi \in \mathscr{H}_l$ is of the form $\phi(r\omega)= \sum\nolimits_{m=-l}^{m=l} \phi_{m}(r) Y_{ml}(\omega)$ for $\omega\in \mathbb{S}^2$, then 
\begin{equation}
\Lplus\phi=\Lplus^{(l)}\phi= \sum_{m=-l}^{m=l} (\LplusL\phi_{m}) Y_{ml}\,,
\end{equation}
where he operators $\LplusL$  are  defined on $L^2([0,R],r^2dr)$ by
\begin{equation}
\LminusL=-\frac{d^2}{dr^2} - \frac 2 r \frac d {dr} +\frac {l(l+1)}{r^2}-e_{\minP}-2V_{\minP}
\end{equation}
and $\LplusL=\LminusL-4\XL$ with $\XL=\XLone-\XLtwo$, where 
\begin{equation}\label{x1x2l}\begin{aligned}
&(\XLone\phi)(r)=\frac{4\pi}{2l+1}\minP(r)\int_0^R\phi(s)\minP(s)s^2\frac{\min\{r,s\}^l}{\max\{r,s\}^{l+1}}ds,\\
&(\XLtwo\phi)(r)=\frac{4\pi}{2l+1}\minP(r)\int_0^R\phi(s)\minP(s)s^2\frac{(rs)^l}{R^{2l+1}} ds.
\end{aligned}\end{equation}
This follows from a straightforward computation, using the \emph{multipole expansion} (see, for example \cite{condon1951theory})
\begin{equation}
\frac 1 {|x-y|}=4\pi\sum_{k=0}^{\infty}\sum_{n=-k}^{k} \frac 1 {2k+1} \frac{\min\{|x|,|y|\}^k}{\max\{|x|,|y|\}^{k+1}} Y_{kn}(\omega_x)Y^*_{kn}(\omega_y)\,.
\end{equation}

Let us define the operator $\tLplus:= \Lminus-4X_1$, and the corresponding restriction to $\mathscr{H}_l$, $\tLplusL:=\LminusL-4\XLone$. 

\begin{lem}
\label{lemgoodproperties}
For $l\geq 1$ the operators $\LplusL$ and $\tLplusL$ satisfy the Perron--Frobenius property, i.e., their least eigenvalue is \emph{simple} and there exists a corresponding eigenfunction which is \emph{strictly} positive on $(0,R)$. This eigenfunction is in $C^{\infty}((0,R))$ and has strictly negative (left) derivative at $r=R$.
\end{lem}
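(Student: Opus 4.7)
The plan is to prove both operators have the Perron--Frobenius property by showing that, for sufficiently large $\mu>0$, the resolvents $(\LplusL+\mu)^{-1}$ and $(\tLplusL+\mu)^{-1}$ are compact, self-adjoint, and \emph{positivity improving} on $L^2([0,R],r^2\,dr)$. Once this is in place, Jentzsch's theorem (i.e., Krein--Rutman for self-adjoint compact operators) implies that the largest eigenvalue of the resolvent is simple with strictly positive eigenvector, which translates into the claimed Perron--Frobenius property for $\LplusL$ (resp. $\tLplusL$). I describe the argument for $\LplusL$; the argument for $\tLplusL$ is obtained by replacing $X^{(l)}$ by $X_1^{(l)}$ throughout and is in fact simpler.

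The first ingredient is strict positivity of the relevant integral kernels on $(0,R)^2$. For $X_1^{(l)}$ this is immediate from \eqref{x1x2l} since $\minP>0$. For $X^{(l)}=X_1^{(l)}-X_2^{(l)}$, a direct computation (assuming $r\leq s$) shows that the kernel is proportional to
\begin{equation*}
\minP(r)\minP(s)\,s^2\,\frac{r^l}{s^{l+1}}\Bigl[1-(s/R)^{2l+1}\Bigr],
\end{equation*}
which is strictly positive on $(0,R)^2$. The second ingredient is strict positivity of the kernel of $(\LminusL+\mu)^{-1}$ for $\mu$ so large that the effective potential $l(l+1)/r^2-e_{\minP}-2V_{\minP}+\mu$ is non-negative on $(0,R)$; this follows from a standard one-dimensional maximum-principle / Sturm argument for the radial ODE with Dirichlet condition at $R$.

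Next, for $\mu$ large enough that $\|4X^{(l)}(\LminusL+\mu)^{-1}\|<1$, one has the norm-convergent Neumann expansion
\begin{equation*}
(\LplusL+\mu)^{-1}=\sum_{k\geq 0}(\LminusL+\mu)^{-1}\bigl[4X^{(l)}(\LminusL+\mu)^{-1}\bigr]^k.
\end{equation*}
Every term is a composition of operators with non-negative integral kernels, and the $k=0$ term already has a strictly positive kernel, so $(\LplusL+\mu)^{-1}$ is positivity improving. Since it is also compact (as the resolvent of a second-order elliptic operator on a bounded interval) and self-adjoint, Jentzsch's theorem furnishes a simple largest eigenvalue $(\lambda_0+\mu)^{-1}$ with strictly positive eigenfunction $\psi$. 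Undoing the shift, $\lambda_0=\infspec\LplusL$ is simple with strictly positive eigenfunction $\psi$ on $(0,R)$. Smoothness of $\psi$ on $(0,R)$ follows by bootstrapping the eigenvalue equation: $\minP$ and $V_{\minP}$ are in $C^\infty(B_R)$, and a short calculation shows that $X^{(l)}\psi$ is smooth on $(0,R)$ whenever $\psi$ is continuous (the boundary contributions arising when differentiating the split integrals in \eqref{x1x2l} cancel), so standard ODE bootstrap yields $\psi\in C^\infty((0,R))$.

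Finally, the strict negativity of $\psi'(R)$ follows from the Hopf boundary point lemma. Rewriting the eigenvalue equation as
\begin{equation*}
-\psi''(r)-\tfrac{2}{r}\psi'(r)-\bigl(\lambda_0+e_{\minP}+2V_{\minP}(r)-l(l+1)/r^2\bigr)\psi(r)=4(X^{(l)}\psi)(r)\geq 0,
\end{equation*}
and setting $v:=-\psi\leq 0$ with $v(R)=0$, the function $v$ attains its maximum value $0$ at the boundary point $r=R$ and is non-constant. The classical Hopf lemma (in the variant valid for arbitrary zeroth-order coefficient when the maximum value equals zero) then yields $v'(R)>0$, i.e., $\psi'(R)<0$. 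The most delicate step in this scheme is the positivity-improving argument for the resolvent; the Hopf boundary estimate and the regularity statement are routine consequences.
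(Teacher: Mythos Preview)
Your proof is correct, and it takes a genuinely different route from the paper's. Both approaches hinge on the same crucial observation (which you state explicitly and the paper states in the first paragraph of its proof): the kernel of $X^{(l)}=X_1^{(l)}-X_2^{(l)}$ is strictly positive on $(0,R)^2$, i.e., $X^{(l)}$ is positivity improving. From there the arguments diverge.

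The paper first conjugates by the unitary $U\colon f\mapsto rf$ to pass to $L^2([0,R],dr)$, and then argues directly at the level of the eigenvalue equation: if $\phi\ge 0$ is a ground state and $\phi(r_0)=0$ at some interior point, then evaluating the equation at $r_0$ gives $-\phi''(r_0)=4(UX^{(l)}U^{-1}\phi)(r_0)>0$, contradicting the fact that $\phi$ has a minimum there. This yields strict positivity of any non-negative ground state, and the paper then appeals to ``standard arguments'' (essentially $\langle|\phi|,L_+^{(l)}|\phi|\rangle\le\langle\phi,L_+^{(l)}\phi\rangle$, which uses the positive kernel of $X^{(l)}$) to conclude simplicity. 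For the boundary derivative, the paper uses an ODE uniqueness argument at $r=R$ (in the spirit of Lemma~\ref{signofder}): if $\phi(R)=\phi'(R)=0$ then $\phi\equiv 0$.

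Your approach is more structural: you establish that the \emph{resolvent} $(\LplusL+\mu)^{-1}$ is positivity improving via the Neumann expansion $(\LminusL+\mu)^{-1}\sum_{k\ge0}\bigl[4X^{(l)}(\LminusL+\mu)^{-1}\bigr]^k$, using that the Schr\"odinger resolvent $(\LminusL+\mu)^{-1}$ already has strictly positive kernel and that all correction terms are positivity preserving. Jentzsch/Krein--Rutman then delivers the Perron--Frobenius property in one stroke. For $\psi'(R)<0$ you invoke the Hopf boundary lemma (in the variant with no sign condition on the zeroth-order coefficient when the maximum value is zero), which is a clean alternative to the paper's ODE uniqueness step.

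Both arguments are short; the paper's is slightly more elementary (no Neumann series, no Hopf lemma), while yours is more systematic and makes the role of positivity of $X^{(l)}$ completely transparent. Your method would also transfer more readily to situations where evaluating the equation pointwise at an interior zero is less convenient.
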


\begin{proof}
We will give the proof for the operators $\LplusL$; it will be important that $\XL=\XLone-\XLtwo$ is positivity improving, which can be checked easily using the explicit form \eqref{x1x2l}.  The proof for $\tLplusL$ works in exactly the same way, using simply that $X_1^{(l)}$ is positivity improving instead.

It will be convenient to introduce the unitary and positive transformation
\begin{equation}\label{def:U}
U:L^2([0,R],r^2dr)\to L^2([0,R],dr) \hspace{5mm}\text{with}\hspace{5mm} (Uf)(r)=rf(r)
\end{equation}
which satisfies 
\begin{equation}\begin{aligned}
&U\LplusL U^{-1} = -\frac {d^2}{dr^2}+\frac {l(l+1)}{r^2}+V,\\
&V:=-e_{\minP}-2V_{\minP}-4UX^{(l)}U^{-1}.
\end{aligned}\end{equation}
Since $U$ is positive, it is equivalent to show the Perron--Frobenius property for $U\LplusL U^{-1}$. 

Since $V$ is bounded, the operators $U\LplusL U^{-1}$ have compact resolvent and eigenfunctions corresponding to the least eigenvalue certainly exist. 
By bootstrapping, we conclude that they are $C^{\infty}((0,R))$. Moreover, if $\phi\geq 0$ is such an eigenfunction, then $\phi>0$ on $(0,R)$. Indeed, if we suppose  that $\phi$ is not strictly positive, then there exists an $r_0\in(0,R)$ such that $\phi(r_0)=0$. Evaluating the Euler--Lagrange equation at $r_0$ we find,  using that $U$ is positive and $X^{(l)}$ is positivity improving,
\begin{equation}
-\phi''(r_0)=4(UX^{(l)} U^{-1} \phi)(r_0)>0.
\end{equation}
This is clearly a contradiction since $\phi$ attains a minimum in $r_0$. From this, we can conclude by standard arguments that the Perron--Frobenius property holds.  

Finally, we need to show that $\phi'(R)<0$ if $\phi$ is the positive ground state function. We already know that $\phi(R)=0$ and $\phi'(R)\leq 0$ (since $\phi$ is positive). If by contradiction $\phi'(R)=0$  standard uniqueness arguments along the lines of Lemma \ref{signofder} imply that $\phi\equiv 0$. Note that also this property is preserved by $U$ since $\phi(R)=0$. 
\end{proof}

\begin{rem}\label{rem2}
From this Lemma and the fact that $X_2^{(l)}$ is positivity improving we conclude that for each $l\geq 1$ we have $\infspec \LplusL> \infspec \tLplusL$. Thus, in order to show that $\ker \LplusL=\{0\}$  for $l\geq 1$, it is sufficient to show $\infspec \tLplusL\geq 0$ for $l\geq 1$. It is actually even possible to show $\infspec \tLplusL> 0$ for $l\geq 1$, which is the content of the next Proposition. This is going to be relevant for Remark~\ref{alsootherkernel} at the end of this section. 
\end{rem}

\begin{prop}
\label{kerLpl}
For any $l>1$, we have
\begin{equation}\label{LLP}
\infspec \tLplusL> \infspec \tilde{L}_+^{(1)} >0.
\end{equation}
In particular, $\ker \LplusL =\{0\}=\ker \tilde{L}_+^{(l)}$ for all $l\geq 1$. 
\end{prop}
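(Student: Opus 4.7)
The plan is to prove the two strict inequalities in \eqref{LLP} separately, then read off the kernel claim. The main obstacle is the positivity $\infspec\Lpluso > 0$; once this is done, the monotonicity $\infspec\tLplusL > \infspec\Lpluso$ for $l > 1$ will follow from a short variational comparison. Both operators have compact resolvent and Perron--Frobenius ground states by Lemma~\ref{lemgoodproperties}. Together, these yield $\infspec\tLplusL > 0$ for all $l \geq 1$, hence $\ker\tLplusL = \{0\}$; the conclusion $\ker\LplusL = \{0\}$ then follows from Remark~\ref{rem2}.

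The key input for the positivity step is the identity $\tLplus\partial_i\minP = 0$ on $B_R$, which produces an explicit radial zero-mode of $\Lpluso$ that fails the Dirichlet condition at $r = R$. To derive it, write $V_{\minP} = \tilde V_{\minP} - V^{\mathrm{corr}}_{\minP}$, where $\tilde V_{\minP}(x) := \int_{B_R} |\minP(y)|^2/|x-y|\,dy$. Both $V_{\minP}$ and $\tilde V_{\minP}$ solve $-\Delta V = 4\pi|\minP|^2$ on $B_R$, so $V^{\mathrm{corr}}_{\minP}$ is harmonic there; by Newton's theorem $\tilde V_{\minP}(x) = 1/R$ for $|x|=R$, while $V_{\minP}$ vanishes on $\partial B_R$, so uniqueness for the Dirichlet problem forces $V^{\mathrm{corr}}_{\minP} \equiv 1/R$ on $B_R$. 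Differentiating \eqref{ELeq} in $x_i$ gives $L_-\partial_i\minP = 2\minP\,\partial_i V_{\minP}$, while a short integration by parts in the kernel defining $X_1$ yields $4X_1\partial_i\minP = 2\minP\,\partial_i\tilde V_{\minP}$; subtracting, $\tLplus\partial_i\minP = -2\minP\,\partial_i V^{\mathrm{corr}}_{\minP} = 0$. Restricting to $\mathscr{H}_1$, the radial function $v(r) := -\minP'(r)$ solves $\Lpluso v = 0$ in the ODE sense on $(0,R)$, is strictly positive on $(0,R]$ (since $\minP$ is radially decreasing and $\minP'(R)<0$ by Lemma~\ref{signofder}), vanishes at $r=0$, but has $v(R)>0$ and so fails to lie in the self-adjoint domain of $\Lpluso$.

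To turn this into $\infspec\Lpluso > 0$, I would apply Green's identity with weight $r^2\,dr$ on $[0,R]$, pairing $v$ against $u := \phi^{(1)}$, the Perron--Frobenius ground state of $\Lpluso$. The multiplicative potentials and the nonlocal term $X_1^{(1)}$ (whose kernel w.r.t.\ $r^2\,dr\otimes s^2\,ds$ is symmetric in $(r,s)$) are self-adjoint and cancel; only the radial Laplacian $-\frac{1}{r^2}\partial_r(r^2\partial_r)$ produces a boundary term. Using $u(R)=0$ and the $O(r)$ behaviour of $u,v$ near $0$, one obtains
\begin{equation*}
(\infspec\Lpluso)\int_0^R uv\,r^2\,dr \;=\; \int_0^R\bigl[v\,\Lpluso u - u\,\Lpluso v\bigr]\,r^2\,dr \;=\; -R^2\,v(R)\,u'(R).
\end{equation*}
The right-hand side is strictly positive because $v(R)=-\minP'(R)>0$ and $u'(R)<0$ by Lemma~\ref{lemgoodproperties}; the integrand on the left is strictly positive, so $\infspec\Lpluso > 0$.

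Finally, for $\infspec\tLplusL > \infspec\Lpluso$ with $l>1$, I compute $\tLplusL - \Lpluso = (l(l+1)-2)/r^2 + 4\bigl(X_1^{(1)} - \XLone\bigr)$. The centrifugal contribution is pointwise strictly positive for $l>1$, and the kernel of $X_1^{(1)} - \XLone$, namely $4\pi\minP(r)\minP(s)\bigl[\tfrac13\tfrac{\min\{r,s\}}{\max\{r,s\}^2} - \tfrac{1}{2l+1}\tfrac{\min\{r,s\}^l}{\max\{r,s\}^{l+1}}\bigr]$, is strictly positive almost everywhere because $\tfrac{1}{2l+1}<\tfrac13$ and $(\min/\max)^l < \min/\max$ off the diagonal. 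Testing against the Perron--Frobenius ground state $\phi^{(l)}>0$ of $\tLplusL$ yields $\langle\phi^{(l)},(\tLplusL-\Lpluso)\phi^{(l)}\rangle > 0$, and the variational characterization of $\infspec\Lpluso$ completes the argument.
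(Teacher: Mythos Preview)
Your proof is correct and follows essentially the same approach as the paper: the boundary-Wronskian argument pairing the (non-Dirichlet) zero mode $\minP'$ with the Perron--Frobenius ground state of $\Lpluso$ to extract $\infspec\Lpluso>0$, followed by the pointwise comparison $\tLplusL-\Lpluso>0$ on positive functions for $l>1$. You supply more detail than the paper does (the derivation of $\tLplus\partial_i\minP=0$ via $V^{\mathrm{corr}}_{\minP}\equiv 1/R$, the boundary-term bookkeeping, and the kernel-positivity justification) and work with $v=-\minP'$ to keep signs positive, but the argument is the same.
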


\begin{proof}
For $i\in\{1,2,3\}$, we have $\dxi\minP(x)\in \mathscr{H}_1$, with
\begin{equation}
\dxi\minP(x)=\minP'(r) \frac {x_i}{r}=\sum_{m=-1}^1 c_m^i \minP'(r) Y_{1m}(\omega)
\end{equation}
for suitable $c_m^i$. 
Since $\minP'(R)<0$, this function is not in the domain of $\tLplus^{(1)}$. As in the proof of Prop.~\ref{kerLo}, we can consider the extension of $\tLplus^{(1)}$ to $H^2(B_R)\cap \mathscr{H}_l$, however (ignoring the Dirichlet boundary condition). 
A straightforward computation shows that 
\begin{equation}
\tLplus^{(1)}  \minP' =0  
\end{equation}
 i.e., $\minP'$ is in the kernel of the extended operator. 
	
Let $\phi$ denote the unique positive ground state of the original, unextended $ \tilde{L}_+^{(1)}$, with ground state energy $\tilde e_1$. The function $\phi$ is strictly positive on $(0,R)$ and satisfies $\phi'(R)<0$. 
Integrating by parts, we have 
\begin{align}\nonumber
0&=\bra{ \phi}\ket{\tilde{L}_+^{(1)} \minP'}
=\bra{ \tilde{L}_+^{(1)} \phi}\ket{\minP'}
+\phi'(R)\minP'(R)R^2-\phi(R)\minP''(R)R^2=\\
&= \tilde e_1 \bra{\phi}\ket{\minP'}+\phi'(R)\minP'(R)R^2.
\end{align}
In particular, we  conclude that
\begin{equation}
\tilde e_1 =-\frac{\phi'(R)\minP'(R)R^2}{\bra{\phi}\ket{\minP'}}> 0\,,
\end{equation}
which is the second inequality in \eqref{LLP}.  For the first inequality, observe that if $0<\phi \in L^2([0,R],r^2dr)$ and $l\geq 2$, 
\begin{align}\nonumber
&(\tLplusL \phi)(r) - (\tilde{L}_+^{(1)}\phi)(r)=\left(\frac{l(l+1)}{r^2} -\frac 2 {r^2}\right) \phi(r)  \\
& +4\pi \minP(r) \int_0^R \phi(s)\minP(s)s^2\left(\frac {\min\{r,s\}}{3\max\{r,s\}^2}-\frac {\min\{r,s\}^l}{(2l+1)\max\{r,s\}^{l+1}}\right)ds>0.
\end{align}
By Lemma \ref{lemgoodproperties}, the ground state $\phi_l$ of $\tLplusL$ is strictly positive. Thus
\begin{equation}
\infspec \tLplusL=\expval{\tLplusL}{\phi_l}_{L^2([0,R],r^2dr)}>\expval{\tilde{L}_+^{(1)}}{\phi_l}_{L^2([0,R],r^2dr)}\geq \tilde e_1 >0,
\end{equation}
which completes the proof.
\end{proof}

With the aid of Propositions~\ref{kerLo} and~\ref{kerLpl}, we can now give the proof 
of Theorem \ref{coercivitythm}. The proof  follows closely \cite[Appendix~A]{frank2019quantum}, with some minor modifications due to the fact that our statement is slightly stronger than the one in \cite{frank2019quantum}. We emphasize that the hard part of the proof was  establishing the triviality of the kernel of $QL_+Q$  (which enters as an assumption in \cite{frank2019quantum}), the remaining part uses  only fairly standard arguments.

\begin{proof}[Proof of Theorem \ref{coercivitythm}]
We shall actually prove the following slightly stronger inequality: For any $L^2$-normalized $\phi \in \Hball$ with $\langle \phi|\phi_R\rangle \geq 0$, 
\begin{equation}\label{svo}
\En(\phi)\geq \En(\phi_R)+K_R  \int_{B_R} |\nabla (\phi_R-\phi)|^2 dx 
\end{equation}
for some $K_R>0$ (independent of $\phi$). Because of the invariance of $\En(\phi)$ under multiplication of $\phi$ by a complex phase, \eqref{svo} readily implies~\eqref{eq:thm2}. 

To show~\eqref{svo} we shall proceed in two steps, one to ensure that the estimate holds \emph{locally} and one to ensure that it holds \emph{globally}.

\textit{Step 1}: In this step we show that~\eqref{svo} holds locally. Let $\phi\in \Hball$ with $\|\phi\|_{2}=1$ and $\langle\phi|\phi_R\rangle \geq 0$. Denoting $\delta=\phi-\minP$ and expanding $\En$ around $\minP$, we have
\begin{align}\nonumber
\En(\phi)&=\En\left(\minP+\delta \right)\\
&=\En(\minP)+\expval{\Lminus}{\Im\delta}+\expval{\Lplus}{\Re\delta}+O(\|\delta\|_{H^1(B_R)}^3)
\end{align}
for small $\|\delta\|_{H^1(B_R)}^3$, with $L_\pm$ defined in~\eqref{def:lpm}. Recall that $L_- = Q L_- Q$ for $Q=\id -|\phi_R\rangle\langle\phi_R|$, and that $L_+ = L_--4X$. 

In order to utilize the previous results, we would need $QXQ$ in place of $X$. To estimate the difference, observe that, since both $\minP$ and $\phi$ have $L^2$-norm equal to $1$, we have
\begin{equation}\begin{aligned}
&\|\delta\|_2^2=2-2 \bra{\minP}\ket{\phi},\\
&(\id-Q)\Re \delta=\minP \bra{\minP}\ket{\delta}=\minP\left( \bra{\minP}\ket{\phi}-1\right)=-\minP\frac{\|\delta\|_2^2}{2}.
\end{aligned}\end{equation}
This readily implies that 
\begin{equation}
\expval{X}{\Re \delta}=\expval{QXQ}{\Re \delta}+O(\|\delta\|_{H^1(B_R)}^3).
\end{equation}
In particular, we have
\begin{equation}
\label{firstequality}
\En(\phi)=\En(\minP)+\expval{\Lminus}{\Im\delta}+\expval{\Lplus}{Q\Re\delta}+O(\|\delta\|_{H^1(B_R)}^3).
\end{equation}
As argued in the beginning of this section, we have $L_- \geq \kappa_- Q$ for some $\kappa_->0$. Moreover,  Propositions~\ref{kerLo} and~\ref{kerLpl} imply that $Q L_+ Q \geq \kappa_+ Q$ for some $\kappa_+>0$. 
With $\kappa=\min\{\kappa_-, \kappa_+\}>0$, we thus have 
\begin{align}\nonumber
&\expval{\Lminus}{\Im\delta}+\expval{\Lplus}{Q\Re\delta}\\
&\geq\kappa(\|Q\Im \delta\|_2^2+\|Q\Re \delta\|_2^2) =\kappa \|Q\delta\|_2^2\,.
\end{align}
The assumption $\langle\phi|\phi_R\rangle\geq 0$ implies that 
\begin{equation}
\|Q\delta\|^2_2 = \|\delta\|^2_2 - \langle \delta|\phi_R\rangle^2 =  \|\delta\|^2_2  \left( 1 -\tfrac 14 \|\delta\|^2_2\right) \geq \tfrac 12 \|\delta\|^2_2
\end{equation}
and hence
\begin{equation}
\label{ineqone}
\expval{\Lminus}{\Im\delta}+\expval{\QLplus}{\Re\delta}\geq
 \frac {\kappa} 2 \|\delta\|_2^2.
\end{equation}

Next we want to improve this lower bound by including the full $H^1$-norm of $\delta$. We can do this by exploiting the explicit form of $\Lplus$ and $\Lminus$. Indeed, by the boundedness of $V_{\minP}$, 
\begin{equation}
\Lminus\geq -\Delta - C\,.
\end{equation}  
Using the smoothness of $\minP$, it not difficult to see that also
\begin{equation}
\QLplus\geq -\Delta -C \,. 
\end{equation} 
In particular, 
\begin{equation}
\label{ineqtwo}
\expval{\Lminus}{\Im\delta}+\expval{\QLplus}{\Re\delta}\geq \expval {-\Delta-C}{\delta}=\|\nabla \delta\|_2^2-C\|\delta\|_2^2.
\end{equation}
By interpolating between \eqref{ineqone} and \eqref{ineqtwo}, we have
\begin{equation}
\expval{\Lminus}{\Im\delta}+\expval{\QLplus}{\Re\delta}\geq \left[ \frac {\kappa(1-\alpha)} 2-C\alpha \right] \|\delta\|_2^2+ \alpha \|\nabla \delta\|_2^2
\end{equation}
for any $0\leq \alpha\leq 1$. 
By choosing $\alpha=\frac {\kappa}{2+\kappa+2C_3}$ and substituting in \eqref{firstequality}, we obtain
\begin{equation}
\label{almostrighteq}
\En(\phi)\geq\En(\minP)+\frac{\kappa}{2+\kappa+2C_3} \|\nabla\delta\|^2_2 +O(\|\delta\|_{H^1(B_R)}^3).
\end{equation}
In particular,  there exist $c>0$ and $K>0$ such that, if $\|\delta\|_{H^1(B_R)}\leq c$, then
\begin{equation}
\En(\phi)\geq\En(\minP)+K \|\nabla(\phi-\phi_R)\|^2_{2}\,.
\end{equation}
In words, we have shown that the desired coercivity estimate holds locally, in the sense that it holds whenever the $H^1$-norm of $\delta$ is sufficiently small.  

\textit{Step 2:} Suppose by contradiction that we cannot find a $K_R$ such that~\eqref{svo} holds  globally on $\Hball$. Then there exist $\phi_n\in\Hball$ with $\|\phi_n\|_2=1$ and $\langle \phi_n|\phi_R\rangle \geq 0$ 
such that 
\begin{equation}
\En(\phi_n)\leq\En(\minP)+\frac 1 n \|\nabla(\minP-\phi_n)\|_2^2 
\end{equation}
for any $n\in \mathbb{N}$. 
At the same time, we recall that by the estimate \eqref{Energybound}, we have
\begin{equation}
\En(\phi_n)\geq \frac 1 2 \|\nabla \phi_n\|^2_2-C.
\end{equation}
By combining the two inequalities, we see that $\phi_n$ is bounded in $H^1(B_R)$. Thus, also $\|\nabla(\minP-\phi_n)\|_2^2$ is bounded, which implies that $\En(\phi_n)\to \En(\minP)$, i.e., $\phi_n$ is a minimizing sequence. Therefore, up to subsequences, $\phi_n$ is converging in $H^1$ to a minimizer, i.e., to $e^{i\theta}\minP$ for some $\theta\in[0,2\pi)$, by the compactness properties exploited in  the proof of Theorem \ref{existencethm}. (There, only $L^2$-convergence and weak $H^1$-convergence are proved, but the strong $H^1$-convergence follows immediately from the convergence of the individual parts of the functional.) The assumption $\langle \phi_n|\phi_R\rangle\geq 0$ implies that  $\|\phi_n-\minP\|_2\leq \|\phi_n-e^{i\theta}\minP\|_2\to 0$, which in turn  implies that $\theta=0$. Thus, we find a contradiction since $\phi_n\to \minP$ in $H^1$ and we can use the local result of step 1.
\end{proof}

\begin{rem}
As explained in Remark \ref{couterexample}, uniqueness of minimizers may fail on general domains, which implies that also~\eqref{eq:thm2} fails in this case. We still believe the bound to hold \emph{locally} even if uniqueness fails, however.  In other words, the Hessian at the minimizer(s) should be non-degenerate, in which case step $1$ in the previous proof still applies. Uniqueness of minimizers enters only in step $2$. 
\end{rem}

\begin{rem}
\label{alsootherkernel}
As a final remark, we point out that  all the results in this paper can be obtained also if considering, instead of the Pekar functional \eqref{energy} on a ball, the Pekar functional on the full space, restricted to  functions in $\Hball$ (extended by $0$ outside $B_R$), i.e., the functional~\eqref{tenergy}. 
Indeed, existence of minimizers can be shown exactly as in Section~3, as well as regularity of minimizers. To show that minimizers must be radial, one needs to use the strong form of the  Riesz inequality  proved in \cite{lieb1977existence} instead of  Talenti's inequality. 
Note that on radial functions the two functionals $\En$ and $\tEn$ differ only by a constant $1/R$ (by Newton's Theorem), i.e., if $\phi\in \Hball$ is radial and $L^2$-normalized then
\begin{equation}
\En(\phi)=\tEn(\phi)+\frac 1 R.
\end{equation}
In particular, the two functionals have the same minimizers. 

The non-degeneracy results for the Hessian can also be extended to $\tEn$. If we denote by $\tilde{H}_R$ the Hessian of $\tEn$ at $\phi_R$, we have
\begin{equation}
\label{HessDec}
{\tilde{H}_R}(\phi)=\expval{\Lminus}{\Im \phi}+\expval{\tLplus}{Q\Re\phi}.
\end{equation}
Here, $Q=\id- |\phi_R\rangle\langle\phi_R|$ as above,  $L_\pm$ is  defined  in~\eqref{def:lpm}, and  $\tilde L_+ = L_+ - 4X_2 = L_- - 4X_1$ with $X_{1,2}$ defined in~\eqref{def:x12}. 
The decomposition \eqref{HessDec} implies that the study of imaginary perturbations can be carried out as above. For real perturbations, we  can again decompose the Hessian w.r.t. spherical harmonics, and carry out the analysis  in each angular momentum sector separately. For $l=0$, i.e., for radial functions, we can argue exactly as in the proof of Proposition \ref{kerLo}, since the modification of the interaction kernel only affects the term $\sigma$ in~\eqref{Lppesigma}, leaving the operator $\Lpplus$ unchanged. 
For $l\geq 1$, we have actually already shown above that  $\tLplus>0$ on $\mathscr{H}_l$. Also the proof of Theorem \ref{coercivitythm} carries over to the modified interaction kernel without change. We thus  conclude that Theorems~\ref{minimizer} and~\ref{coercivitythm} are also valid, as stated, for the functional $\tEn$. 
\end{rem}

\appendix

\section{Uniqueness Properties for the Radial Euler--Lagrange Equation}\label{appA}

In this Appendix we show two Lemmas dealing with the radial Euler--Lagrange equation~\eqref{RadialELeq}.  The first one proves  uniqueness of solutions with the same boundary conditions at $r=0$.
We recall that $U_{\phi}=4\pi\int_0^r s^2(\frac 1 s-\frac 1 r)|\phi(s)|^2 ds$. We take the eigenvalue $\nu_{\phi}=1$ for simplicity, which can be achieved by a suitable rescaling. 

\begin{lem}\label{localuniqueness}
Let $v_1, v_2 \in C^2([0,T])$ be two  solutions of
\begin{equation}\label{star}
\begin{cases}
&-v''(r)-\frac 2 r v'(r) +2U_v(r) v(r)=v(r) \hspace{10mm} r\in[0,T]\\
&v(0)=a,\\
&v'(0)=0
\end{cases}
\end{equation}
for some $a\in \R$ and  $T>0$. Then $v_1=v_2$ in $[0,T]$.
\end{lem}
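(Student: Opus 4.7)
The plan is to convert the ODE to a Volterra-type integral equation that absorbs the apparent singularity at $r=0$, and then to close a Gronwall argument on the difference $w=v_1-v_2$. First I would multiply \eqref{star} by $r^2$ to rewrite it as $(r^2 v')'=r^2[2U_v(r)-1]v(r)$. Integrating twice from $0$, using $v(0)=a$ and $v'(0)=0$, and then swapping the order of the two integrations via Fubini, each solution must satisfy
\begin{equation}\label{intform}
v(r)=a+\int_0^r s^2\Bigl(\tfrac{1}{s}-\tfrac{1}{r}\Bigr)\bigl[2U_v(s)-1\bigr]v(s)\,ds.
\end{equation}
Note that $s^2(1/s-1/r)=s-s^2/r\le s$ on $[0,r]$, so the kernel is bounded (no singularity) and \eqref{intform} is equivalent to \eqref{star} for $C^2$ functions.

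Next I would estimate $w:=v_1-v_2$. Since $v_1,v_2\in C^2([0,T])$ they are bounded: set $M:=\max_i\|v_i\|_{L^\infty([0,T])}$. Writing the integrand in \eqref{intform} as $[2U_{v}-1]v$ and using $|v_1|^2-|v_2|^2=(v_1+v_2)w$, I would split
\begin{equation}
\bigl[2U_{v_1}-1\bigr]v_1-\bigl[2U_{v_2}-1\bigr]v_2=\bigl[2U_{v_1}-1\bigr]w+2\bigl(U_{v_1}-U_{v_2}\bigr)v_2,
\end{equation}
and bound each piece. For the nonlocal piece, the same elementary estimate $t^2(1/t-1/s)\le t$ gives
\begin{equation}
\bigl|U_{v_1}(s)-U_{v_2}(s)\bigr|\le 4\pi\int_0^s t^2\Bigl(\tfrac{1}{t}-\tfrac{1}{s}\Bigr)|v_1+v_2|(t)\,|w(t)|\,dt\le C_1 s^2 \sup_{[0,s]}|w|,
\end{equation}
with $C_1=C_1(M)$; likewise $|U_{v_1}(s)|\le C_2 s^2$. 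Substituting into \eqref{intform} for $v_1$ minus the same for $v_2$ yields
\begin{equation}
|w(r)|\le \int_0^r s^2\Bigl(\tfrac{1}{s}-\tfrac{1}{r}\Bigr) C_3(1+s^2)\sup_{[0,s]}|w|\,ds\le C_4\int_0^r s \sup_{[0,s]}|w|\,ds
\end{equation}
for all $r\in[0,T]$, where $C_4$ depends only on $M$ and $T$.

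Finally, since the right-hand side is monotone in $r$, setting $M(r):=\sup_{[0,r]}|w|$ gives $M(r)\le C_4\int_0^r sM(s)\,ds$, and Gronwall's inequality (with vanishing initial data) forces $M\equiv 0$ on $[0,T]$, i.e.\ $v_1=v_2$. The main obstacle I would expect is precisely the combination of the singular coefficient $2/r$ at the origin and the nonlocal character of $U_v$: a naive Picard iteration on \eqref{star} does not close because of the $1/r^2$ weight coming from integrating $(r^2v')'$ back, and the nonlocality means one cannot localize the Lipschitz estimate to a single point. Both are resolved by the Fubini rewriting in \eqref{intform}, which converts the double integration into the manifestly bounded kernel $s^2(1/s-1/r)$ and makes the Gronwall step routine.
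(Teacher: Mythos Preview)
Your argument is correct and essentially coincides with the paper's. The paper substitutes $\sigma=rv$, so that $\sigma''=(2U_v-1)\sigma$ with $\sigma(0)=0$, $\sigma'(0)=a$, and then applies Taylor's formula with integral remainder; dividing by $r$ gives exactly your integral equation \eqref{intform}, so the reduction step is the same computation in different clothing. The only difference is in the closing step: the paper bounds $\|\sigma_1-\sigma_2\|_{L^\infty([0,r])}\le r^2(\tfrac12+Cr^2)\|\sigma_1-\sigma_2\|_{L^\infty([0,r])}$ to get local uniqueness and then iterates along $[0,T]$, whereas you feed the estimate into Gronwall and obtain global uniqueness in one stroke---a slightly cleaner finish.
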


\begin{proof}
Let $\sigma_i(r):=rv_i(r)$. Then $\sigma_i'(r)=v_i(r)+rv_i'(r)$ and $\sigma_i''=2U_{v_i} \sigma_i-\sigma_i$. 
By applying Taylor's formula with remainder in integral form,  and denoting $I_r:=[0,r]$, we have 
\begin{align}\nonumber
&|\sigma_1(r)-\sigma_2(r)|\leq \left|\int_0^r \left[ \sigma_1(s)(2U_{v_1}(s)-1) - \sigma_2(s)(2U_{v_2}(s)-1)\right] (r-s)ds\right|\\ \nonumber
&\leq \int_0^r(r-s)|\sigma_1(s)-\sigma_2(s) |ds+2\int_0^r(r-s)|\sigma_1(s)U_{v_1}(s)-\sigma_2(s) U_{v_2}(s)| ds \\
&\leq  r^2\left[ \|\sigma_1-\sigma_2\|_{L^{\infty}(I_r)}(  \tfrac 12 + \|U_{v_1}\|_{L^{\infty}(I_r)})+\|\sigma_2\|_{L^{\infty}(I_r)}\|U_{v_1}-U_{v_2}\|_{L^{\infty}(I_r)}\right]. \label{a2}
\end{align}
Boundedness of $v_{1,2}$ implies that  $\|U_{v_1}\|_{L^{\infty}(I_r)}\leq C r^2$ and  $\|\sigma_2\|_{L^{\infty}(I_r)}\leq C r$ for suitable constants $C$.  Elementary computations also show that $\|U_{v_1}-U_{v_2}\|_{L^{\infty}(I_r)}\leq C  r \|\sigma_1-\sigma_2\|_{L^{\infty}(I_r)}$. In particular, from~\eqref{a2} we conclude that 
\begin{equation}
\|\sigma_1-\sigma_2\|_{L^{\infty}(I_r)}\leq r^2(\tfrac 12 + C r^2 ) \|\sigma_1-\sigma_2\|_{L^{\infty}(I_r)}.
\end{equation}
Thus, $\sigma_1=\sigma_2$ on $I_{\delta}$ whenever $\delta^2(\tfrac 12+ C \delta^2)<1$, and we have  local uniqueness of solutions for~\eqref{star}. The same computations can be carried out \emph{mutatis mutandis} by considering an arbitrary starting point instead of $0$. In particular, we can go from local uniqueness to global uniqueness by iteration of the argument: if the two functions only coincide in a \emph{maximal} interval $[0,T^*]$ with $T^*<T$ (note that by continuity they necessarily coincide on a closed interval) then we get a contradiction by applying the argument with starting point $T^*$. 
\end{proof}

The second Lemma is concerned with uniqueness of solutions with the same boundary conditions at $r=R$. In particular, we want to show that if a function vanishes at $R$, its derivative there must be non-zero, unless the function is identically zero. The proof proceeds along the same lines as above, but is slightly simpler since it suffices to consider here the case where the potential  is fixed to be $U_{\minP}$, with $\minP$ the unique minimizer of the Pekar functional, i.e., we only consider the linearized equation.

\begin{lem}
\label{signofder}
The derivative of $\minP$ satisfies
\begin{equation}\label{limpp}
\lim_{r\nearrow R} \minP'(r)=c \hspace{3mm} \text{for some} \hspace{3mm} c<0.
\end{equation}
\end{lem}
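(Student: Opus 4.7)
The plan is to prove existence of the limit, show that it is non-positive by positivity of $\minP$, and rule out the value $0$ by a uniqueness argument for the linear radial ODE analogous to that of Lemma~\ref{localuniqueness}.

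First, I would argue that $\minP'$ extends continuously up to $r=R$ so that the limit in \eqref{limpp} exists. Since $\minP\in H^1_0(B_R)$ solves $-\Delta\minP=(e_{\minP}+2V_{\minP})\minP$ with $V_{\minP}\in L^\infty(B_R)$ (see Lemma~\ref{regularity}), elliptic regularity for the Dirichlet Laplacian on the smooth domain $B_R$ gives $\minP\in C^{1,\alpha}(\overline{B_R})$ (and in fact more). By the radiality of $\minP$ this immediately implies that $\minP'(r)$ admits a finite limit $c$ as $r\nearrow R$. Moreover, since $\minP\geq 0$ on $B_R$ and $\minP(R)=0$, one has $c=\lim_{r\nearrow R}(\minP(r)-\minP(R))/(r-R)\leq 0$.

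The crux of the argument is therefore to exclude $c=0$. Suppose for contradiction that $\minP(R)=\minP'(R)=0$. Set $\sigma(r):=r\minP(r)$. A direct computation using Lemma~\ref{ELrad} shows that
\begin{equation}
\sigma''(r)=W(r)\sigma(r),\qquad W(r):=2U_{\minP}(r)-\nu_{\minP},
\end{equation}
where $W\in L^\infty(0,R)$ since $\minP$ is bounded. Moreover $\sigma(R)=\sigma'(R)=R\minP(R)+\minP(R)\cdot\text{(nothing)}=0$ (here using $\sigma'(r)=\minP(r)+r\minP'(r)$). Writing Taylor's formula with integral remainder, based at $R$, gives
\begin{equation}
\sigma(r)=\int_R^r W(s)\sigma(s)(r-s)\,ds,\qquad r\in(0,R],
\end{equation}
so that, denoting $J_r:=[r,R]$,
\begin{equation}
\|\sigma\|_{L^\infty(J_r)}\leq \tfrac{1}{2}(R-r)^2\|W\|_\infty\|\sigma\|_{L^\infty(J_r)}.
\end{equation}
Choosing $r$ close enough to $R$ so that $\tfrac12(R-r)^2\|W\|_\infty<1$ forces $\sigma\equiv 0$ on $J_r$, hence $\minP\equiv 0$ on this subinterval. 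The very same argument, now starting from any point of the maximal interval on which $\minP$ vanishes, extends the vanishing inward. As in the last paragraph of the proof of Lemma~\ref{localuniqueness}, iteration of this step yields $\minP\equiv 0$ on $(0,R]$, contradicting $\|\minP\|_2=1$. Therefore $c<0$, as claimed.

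I expect the only real subtlety to be ensuring enough regularity at $r=R$ to make sense of $\minP'(R)$ as a limit in the first place; once this is granted, the contradiction argument is essentially a linear ODE uniqueness statement because the nonlocal potential $U_{\minP}$ is already fixed (we are not comparing two nonlinear solutions, only checking that $\minP$ cannot have a double zero at $R$). The regularity issue is handled cleanly by elliptic regularity on the smooth bounded domain $B_R$ together with the boundedness of $V_{\minP}$ established in Lemma~\ref{regularity}.
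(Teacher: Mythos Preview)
Your argument is correct and follows essentially the same route as the paper: reduce to the linear ODE $\sigma''=(2U_{\minP}-\nu_{\minP})\sigma$ for $\sigma(r)=r\minP(r)$, and use Taylor's formula with integral remainder from $r=R$ to show that $\sigma(R)=\sigma'(R)=0$ forces $\sigma\equiv 0$ on a left neighborhood of $R$. The only differences are cosmetic: the paper obtains existence of the boundary limit from the explicit integral representation $\minP'(r)=r^{-2}\int_0^r s^2(2U_{\minP}(s)-\nu_{\minP})\minP(s)\,ds$ rather than from elliptic regularity, and it stops the contradiction one step earlier (vanishing of $\minP$ on any subinterval already contradicts $\minP>0$ on $B_R$, so no iteration is needed).
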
 

\begin{proof}
Integrating Eq.~\eqref{ELrad} using  that $\minP'(0)=0$, we have
\begin{equation}
\minP'(r)=\frac 1 {r^2} \int_0^r s^2\left(2 U_{\minP}(s) - \nu_{\minP}\right)\minP(s) ds.
\end{equation} 
From this we deduce that the limit in \eqref{limpp} exists and is finite, and by the monotonicity of $\minP$ it must be non-positive.  Suppose that $\minP'(R)=0$ and consider the function $\sigma(r):=r\minP(r)$, which then satisfies
\begin{equation}
\begin{cases}
&\sigma''=\left( 2 U_{\minP} - \nu_{\minP} \right)\sigma\hspace{10mm} \text{in}\hspace{3mm}[0,R]\\
&\sigma(R)=0,\\
&\sigma'(R)=0.
\end{cases}
\end{equation}
Using Taylor expansion (w.r.t. $R$) with remainder in integral form, we have
\begin{equation}
\sigma(r)=\int_r^R(s-r)\sigma''(s)ds=\int_r^R(s-r)\left(2 U_{\minP}(s) -\nu_{\minP} \right)\sigma(s)ds.
\end{equation}
Since $U_{\minP}$ is bounded, 
\begin{equation}
|\sigma(r)|\leq C(R-r)^2\|\sigma\|_{L^{\infty}([r,R])},
\end{equation}
which implies that $\sigma\equiv 0$ on $[\bar{r},R]$ if $\bar{r}$ is such that $C(R-\bar{r})^2<1$. This is a contradiction since  $\minP>0$ on $B_R$.
\end{proof}

\section{Convergence Results}\label{appB}

In this appendix we shall show that the Pekar minimizer $\phi_R$ and its energy $E_R$ converge to the corresponding full space quantities as $R\to \infty$. Recall that we have shown above that, for each $R>0$, there exists a unique positive minimizer $\phi_R$  of $\En$ (for $L^2$-normalized functions in $\Hball$).
On the other hand, it was shown in \cite{lieb1977existence} that there exists a unique positive and radial $\Psi$ minimizing the full space Pekar functional 
\begin{equation}
\label{PekarSpace}
\mathscr{E}(\phi)=\int_{\Rtre}|\nabla \phi|^2 dx -  \int_{\Rtre} \int_{\Rtre}\frac{|\phi(x)|^2|\phi(y)|^2}{|x-y|} dx dy=:T(\phi)-W(\phi).
\end{equation} 
 (for $L^2$-normalized functions in $\Hspace$). Our goal is to show that  $\phi_R \to \Psi$ (in $\Hspace$-norm, as well as pointwise) as $R\to \infty$, and that $E_R\to E_{\infty}:=\mathscr{E}(\Psi)$. We start with the latter.

\begin{prop}\label{approp}
$\lim_{R\to\infty} E_R = E_{\infty}$
\end{prop}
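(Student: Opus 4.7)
The plan is to establish matching bounds $\liminf_{R\to\infty} E_R \geq E_\infty$ and $\limsup_{R\to\infty} E_R \leq E_\infty$. The lower bound is immediate from the structure of the Dirichlet Green's function, while the upper bound requires a truncation of the full-space minimizer $\Psi$ together with a quantitative control of the image-charge term in~\eqref{DGF}, which is the main technical point.

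For the lower bound, I would observe that~\eqref{DGF} exhibits $\Gker$ as the free-space Green's function minus a \emph{non-negative} image term. Hence, for any $L^2$-normalized $\phi \in \Hball$ extended by zero to $\Rtre$, one has $T_R(\phi) = T(\phi)$ and $W_R(\phi) \leq W(\phi)$, so $\En(\phi) \geq \mathscr{E}(\phi) \geq E_\infty$. Taking the infimum over such $\phi$ yields $E_R \geq E_\infty$ for every $R>0$.

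For the upper bound, my trial function is a cutoff of $\Psi$. Fix $\chi \in C_c^\infty(\Rtre)$ with $\chi \equiv 1$ on $\{|x|\leq 1/2\}$ and $\mathrm{supp}\,\chi \subset \{|x|\leq 1\}$, and set $\chi_R(x) := \chi(2x/R)$, so that $\chi_R$ is supported in $\overline{B_{R/2}}$. Define $\Psi_R := \chi_R \Psi / \|\chi_R \Psi\|_2 \in \Hball$. Since $\Psi \in \Hspace$ and $\|\nabla \chi_R\|_\infty = O(1/R)$, dominated convergence gives $\|\chi_R \Psi\|_2 \to 1$ and $T(\Psi_R) \to T(\Psi)$; combined with Hardy--Littlewood--Sobolev applied to $|\Psi_R|^2 - |\Psi|^2 \to 0$ in $L^{6/5}$, this also yields $W(\Psi_R) \to W(\Psi)$, so that $\mathscr{E}(\Psi_R) \to E_\infty$.

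The main remaining step is to control the image-charge correction $W(\Psi_R) - W_R(\Psi_R)$ as $R \to \infty$. Using the factorization $\big|\tfrac{|y|}{R}x - \tfrac{R}{|y|}y\big| = \tfrac{|y|}{R}|x - y^*|$ with inverse point $y^* := (R^2/|y|^2)y$ of magnitude $R^2/|y|$, I obtain
\begin{equation*}
\Big|\tfrac{|y|}{R}x - \tfrac{R}{|y|}y\Big| \;\geq\; R - \frac{|x||y|}{R} \;\geq\; \frac{3R}{4} \qquad \text{whenever } |x|,|y| \leq R/2.
\end{equation*}
Since $\Psi_R$ is supported in $B_{R/2}$, this yields $0 \leq W(\Psi_R) - W_R(\Psi_R) \leq \tfrac{4}{3R}\|\Psi_R\|_2^4 = O(1/R)$. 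Therefore $\En(\Psi_R) = \mathscr{E}(\Psi_R) + O(1/R) \to E_\infty$, and since $E_R \leq \En(\Psi_R)$ we conclude $\limsup_{R\to\infty} E_R \leq E_\infty$, completing the argument.
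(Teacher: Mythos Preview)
Your proof is correct and follows essentially the same two-sided strategy as the paper: the lower bound $E_R\geq E_\infty$ from monotonicity of the Green's function, and the upper bound from a cutoff trial function built out of $\Psi$. The only notable difference is in how the image-charge term is controlled: the paper uses Newton's theorem (exploiting that $\Psi$, and hence the cutoff $\Psi_R$, is radial) to obtain the exact identity $W_R(\Psi_R)=W(\Psi_R)-\tfrac{1}{R}\|\Psi_R\|_2^4$, whereas you use the pointwise bound $\big|\tfrac{|y|}{R}x-\tfrac{R}{|y|}y\big|\geq 3R/4$ on the support $B_{R/2}\times B_{R/2}$ of your trial function. Your argument is slightly more robust in that it does not rely on radiality of $\Psi$, at the cost of a less sharp (but entirely sufficient) constant.
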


\begin{proof}
We start by approximating $\Psi$ with functions in $\Hball$. Consider the  sequence of cutoff functions
\begin{equation}
\eta_R(x)=
\begin{cases}
1 &x\in B_{R/2},\\
\frac{2(R-|x|)}{R} & x\in B_R\setminus B_{R/2},\\
0 &x\in B_R^c. 
\end{cases}
\end{equation}
We claim that $\Psi_R:=\eta_R\Psi \to \Psi$ in $H^1(\Rtre)$ and that
$\En(\Psi_R)\to\mathscr{E}(\Psi)=E_{\infty}$.
The $L^2$-convergence of $\Psi_R$ to $\Psi$ is immediate. Moreover, 
\begin{align}\nonumber
&\int_{\Rtre}|\nabla(\Psi_R-\Psi)|^2 dx\leq 2 \int_{\Rtre} |(\eta_R-1)\nabla \Psi|^2 dx+2  \int_{\Rtre} |\Psi\nabla \eta_R|^2 dx\\
&\leq 2\int_{B_{R/2}^c} |\nabla \Psi|^2 dx+\frac 8 {R^2} \int_{B_R\setminus B_{R/2}} |\Psi|^2dx\to 0
\end{align} 
as $R\to\infty$, showing the $H^1$-convergence.
To show $\En(\Psi_R)\to\mathscr{E}(\Psi)$, we first observe that $H^1$-convergence implies the convergence of the $L^2$-norms of the gradients and hence that $T_R(\Psi_R)\to T(\Psi)$. Moreover, from Newton's Theorem and the fact that the functions $\Psi_R$ are radial, we get
\begin{equation}
W_R(\Psi_R)=W(\Psi_R)+\frac 1 R\|\Psi_R\|_2^4.
\end{equation}
We can then apply dominated convergence to show $W(\Psi_R)\to W(\Psi)$ and conclude that our claim holds. 

It is now straightforward to conclude convergence of the minima. Indeed, we have
\begin{equation}
E_{\infty}\leftarrow\En(\Psi_R)=\|\Psi_R\|_2^2\left(T_R(\Psi_R/\|\Psi_R\|)-\|\Psi_R\|_2^2W_R(\Psi_R/|\Psi_R|)\right)\geq \|\Psi_R\|_2^2E_R. 
\end{equation} 
On the other hand, 
\begin{equation}
E_R=\En(\phi_R) \geq \mathscr{E}(\phi_R)\geq E_{\infty}.
\end{equation}
Therefore, necessarily $E_R\to E_{\infty}$.
\end{proof}

From the the previous Proposition, we readily deduce that $\phi_R$ is a minimizing sequence for the full space Pekar functional~\eqref{PekarSpace}. We can then proceed as in the proof of \cite[Theorem 7]{lieb1977existence} to conclude that $\phi_R$ is converging to $\Psi$ pointwise, weakly in $H^1(\Rtre)$ and strongly in $L^2(\Rtre)$. This latter statement implies also the convergence of the interaction energies, and since we have already proven the convergence of the full energies in Prop.~\ref{approp}, we conclude that also $\|\nabla \phi_R\|_2\to \|\nabla \Psi\|_2$. In combination with weak $H^1$-convergence, this implies strong $H^1$-convergence, and thus completes the proof of the convergence of the minimizers.

\begin{rem}
It is also possible to frame this discussion in the language of $\Gamma$-convergence of the functionals $\En$ to $\mathscr{E}$ w.r.t. the $H^1(\Rtre)$-norm. The corresponding liminf inequalities are readily shown to hold, and it is possible to recast the cutoff argument to construct recovery sequences for any $\Phi \in H^1(\Rtre)$ (which requires a little extra work for non-radial functions). In order to deduce the convergence of minimizers,  one still needs to employ the methods in \cite{lieb1977existence} in order to conclude equi-mild-coercivity of the functionals, however. 
\end{rem}

\bigskip
\noindent {\it Acknowledgments.} We are grateful for the hospitality at the Mittag-Leffler Institute, where part of this work has been done. 
Financial support through  the European Research Council (ERC) under the European Union's Horizon 2020 research and innovation programme (grant agreement No 694227) is acknowledged.

\end{document}